%% file: ccta26_longVer_V5.tex
\documentclass[conference,onecolumn]{IEEEtran}
\IEEEoverridecommandlockouts

\usepackage{cuted} 
\usepackage{tcolorbox}
\usepackage{cite}
\usepackage{amsmath,amssymb,amsfonts}
\usepackage{mathtools}
\usepackage{bm}
\usepackage{algorithmic}
\usepackage{graphicx}
\usepackage{subcaption}
\usepackage{textcomp}
\usepackage{xcolor}
\usepackage[acronym]{glossaries}
\usepackage{empheq}      
\usepackage{arydshln}
\usepackage[american]{circuitikz}
\usetikzlibrary{fit}

\newtheorem{theorem}{Theorem}
\newtheorem{lemma}{Lemma}
\newtheorem{proposition}{Proposition}
\newtheorem{assumption}{Assumption}
\newtheorem{remark}{Remark}

\newenvironment{proof}{%
  \par\noindent\textbf{Proof.}\ }{%
  \hfill$\blacksquare$\par
}

\newcommand{\R}{\mathbb{R}}

\newacronym{ncs}{CPS}{cyber-physical system}
\newacronym{ftc}{FTC}{fault-tolerant control}
\newacronym{spr}{SPR}{strictly positive real}
\newacronym{dgu}{DGU}{distributed generation unit}
\newacronym{smo}{SMO}{sliding–mode observer}
\newacronym{a-pdgd}{Aug-PDGD}{augmented primal-dual gradient dynamics}
\newacronym{kkt}{KKT}{Karush--Kuhn--Tucker}

\def\BibTeX{{\rm B\kern-.05em{\sc i\kern-.025em b}\kern-.08em
    T\kern-.1667em\lower.7ex\hbox{E}\kern-.125emX}}

\begin{document}

\title{A Primal-Dual-Based Active Fault-Tolerant Control Scheme for Cyber-Physical Systems: Application to DC Microgrids\\
\thanks{This work is partially supported by the Bundesministerium fur Wirtschaft und Klimaschutz (BMWK), Germany under LuFo VI - 2 joint project “Safe and reliable electrical and thermal networks for hybrid-electric drive systems (ETHAN, project number: 20L2103F1”. This research is also supported by the German Federal Government, the Federal Ministry of Research, Technology and Space (BMFTR) and the State of Brandenburg within the framework of the joint project EIZ: Energy Innovation Center (project numbers 85056897 and 03SF0693A) with funds from the Structural Development Act (Strukturstärkungsgesetz) for coal-mining regions.}
}

\author{
\IEEEauthorblockN{
Wasif H. Syed\IEEEauthorrefmark{1},
Juan E. Machado\IEEEauthorrefmark{1},
Johannes Schiffer\IEEEauthorrefmark{1}\IEEEauthorrefmark{4}
}
\IEEEauthorblockA{\IEEEauthorrefmark{1}
\textit{Chair of Control Systems and Network Control Technology,}\\
\textit{Brandenburg University of Technology Cottbus--Senftenberg,}\\
Cottbus, Germany,\\
\{syed,machadom,schiffer\}@b-tu.de.
}
\IEEEauthorblockA{\IEEEauthorrefmark{4}
\textit{Fraunhofer Institute for Energy Infrastructures and Geotechnologies,}\\
 Cottbus, Germany.\\
}
}

\maketitle

\begin{abstract}
We consider the problem of active fault-tolerant control in cyber-physical systems composed of strictly passive linear-time invariant dynamic subsystems. We cast the problem as a constrained optimization problem and propose an augmented primal-dual gradient dynamics-based fault-tolerant control framework that enforces network-level constraints and provides optimality guarantees for the post-fault steady-state operation. By suitably interconnecting the primal-dual algorithm with the cyber-physical dynamics, we provide sufficient conditions under which the resulting closed-loop system possesses a unique and exponentially stable equilibrium point that satisfies the  Karush--Kuhn--Tucker (KKT) conditions of the constrained problem. The framework's effectiveness is illustrated through numerical experiments on a DC microgrid.
\end{abstract}

\begin{IEEEkeywords}
Fault-Tolerant Control, Primal-Dual Dynamics,  Distributed Control, Microgrids, Constraint Optimization, Optimal Control.
\end{IEEEkeywords}

\section{Introduction}

A \gls{ncs} consists of several subsystems coupled through a physical infrastructure and a flexible communication layer~\cite{ferrari}. Such architectures are now widespread in industrial automation (e.g., smart grids, manufacturing, process control, automotive systems, and intelligent highways), motivating extensive research on their modeling, analysis, and control~\cite{Xian-Ming}.
In many classes of \gls{ncs}s the subsystems must act cooperatively, for instance power plants jointly meeting a city’s demand, trucks platooning, or tugboats coordinating maneuvers in ports~\cite{ferrari}. However, conventional feedback control can lead to underperformance or even instability under component malfunctions, which is particularly undesired in safety-critical applications~\cite{ZHANG}. This motivates the development of \gls{ftc} schemes that maintain acceptable performance despite faults and other abrupt changes in  \gls{ncs} dynamics~\cite{Francesca}. 

\gls{ftc} schemes are commonly categorized into passive or active approaches~\cite{Francesca}. On the one hand, passive \gls{ftc} would typically employ a fixed robust controller to tolerate anticipated faults without reconfiguration. However,  its effectiveness can degrade under large or multiple faults. On the other hand, active \gls{ftc} typically entails a controller reconfiguration (or of its parameters) after fault detection (and isolation) in order to recover acceptable performance~\cite{Francesca}. 
In this paper, we focus on active FTC design for a class of \gls{ncs}s in which each subsystem has strictly passive dynamics and in which subsystem interconnections are power-preserving, leading to an overall \gls{ncs} which remains strictly passive, thereby providing a structured basis for \gls{ftc} design \footnote{In many \gls{ncs}s, subsystems are passive either due to inherent physical properties~\cite{van2000l2}  or can be rendered passive via local control~\cite{ortega2004interconnection}.}.

Active \gls{ftc} design has been explored in \cite{ferrari,SchenkCooperative,SchenkMorocco,SchenkFrance} from the viewpoint of flexible task assignment: when a subsystem is degraded, the cooperative objective is redistributed according to residual capabilities, leveraging system-level redundancy without reconfiguring local controllers. Moreover, a decentralized adaptive \gls{ftc} scheme based on backstepping is studied in \cite{ChunXie}, where each subsystem is rendered input-to-state stable with respect to its interconnection inputs and cyclic small-gain conditions are enforced, ensuring network-level stability under actuator faults and disturbances. In \cite{Francesca}, a distributed tube-based MPC is used for active fault isolation/diagnosis in \gls{ncs}: diagnostic inputs are optimized to separate fault models while constraints are enforced locally via invariant tubes; if reconfiguration is infeasible, the affected subsystem is unplugged.

Overall, the above-mentioned FTC design approaches provide valuable mechanisms for fault accommodation (task redistribution, adaptive compensation, or diagnosis). However, they lack a network-level constrained optimization with explicit optimality guarantees that co-optimizes post-fault performance under coupled constraints (e.g., shared resources and interconnection limits), which is central in \gls{ncs}s such as power and energy networks (e.g., microgrids).
To address these limitations, in the present work we make the following  contributions to \gls{ftc} design for (strictly passive) CPSs: 
\begin{itemize}
    \item We cast the post-fault network operation of a \gls{ncs} as a constrained convex optimization problem and formulate an \gls{a-pdgd} for its solution. The \gls{a-pdgd} enables strict constraint satisfaction, and exhibits an energy-/cost-optimal operating point. 
    \item By suitably interconnecting the considered strictly passive \gls{ncs} with the \gls{a-pdgd}, we provide an explicit characterization of the controller parameters under which the resulting closed-loop system  
    admits an exponentially stable equilibrium point that satisfies the  \gls{kkt} conditions of the constrained problem. This is accomplished by combining prior results in \cite{qu_li_augPD} that establish exponential stability of the equilibrium of the \gls{a-pdgd} with the control-by-interconnection (CbI) framework \cite{CBI_OrtegaArjan}. 
    \item We demonstrate the potential of the proposed \gls{ftc} for \gls{ncs}s via an application to class of clustered DC microgrids. This is particularly appealing scenario, since following a capacity degradation or outage, the \gls{ftc} based on the \gls{a-pdgd} computes the least-cost, loss-aware redistribution of generation while respecting practical limits on voltages and currents.
\end{itemize}

\subsection*{Notation}

The notation used in this paper is as follows. For elements \(x_i\in\R\), the expression \(x = \mathrm{col}(x_i)\) denotes the column vector formed by stacking \(x_i\). The operators \(\mathrm{diag}(x_i)\) and \(\mathrm{blkdiag}(x_i)\) denote diagonal and block-diagonal matrices with entries or blocks \(x_i\), respectively. The symbols \(\mathbf{I}\) and \(\mathbf{0}\) denote the identity and zero matrices of appropriate dimensions. For any signal \(x:\R\to\R^n\), the deviation from a constant \(\bar{x}\in\R^n\) is written as \(\tilde{x} = x - \bar{x}\). For any symmetric matrix \(M \in \mathbb{R}^{n\times n}\), the quadratic form \(x^\top M x\) is compactly written as \(\|x\|_{M}^{2}\). For two symmetric matrices \(A\) and \(B\), the notation \(A \succ B\) (respectively \(A \succeq B\))
means that \(A - B\) is positive definite (respectively positive semidefinite). For a square matrix $M$, $\lambda_i(M)$ denotes its $i$-th eigenvalue, and $\text{Re}(\lambda_i(M))$ denotes its real part. The symbol \(\lambda_{\max}(M)\) denotes the largest eigenvalue of a symmetric matrix \(M\). For any collection of matrices $M_{i,j}$, the notation 
$[\,M_{i,j}\,]_{i,j=1}^N$ denotes the block matrix obtained by placing $M_{i,j}$ in the $(i,j)$-th block position. For any set $S\subset\mathbb{R}^n$, $\operatorname{conv}(S)$ denotes its convex hull.

\section{CPS Model} \label{sec:SystemModel}

\subsection{Subsystem dynamics and interconnection rule}

We consider a \gls{ncs} composed of \(N>1\) interconnected LTI subsystems with dynamics
\begin{subequations}\label{eq:plant_i}
\begin{align}
\dot{x}_i &= A_i x_i + B_i u_i + d_i + G_i w_i,\\
y_i &= C_i x_i,\\
z_i &= E_i x_i,
\end{align}
\end{subequations}
where \(x_i \in \mathbb{R}^{n_i}\) is the state, \(u_i \in \mathbb{R}^{m_i}\) is the control input,
\(y_i \in \mathbb{R}^{p_i}\) is the measured output, \(z_i \in \mathbb{R}^{s_i}\) is the interconnection output, \(w_i \in \mathbb{R}^{q_i}\) is the interconnection input, and \(d_i \in \mathbb{R}^{n_i}\) is a constant vector, $i=1,\ldots,N$. 
The interconnection of the $i$-th subsystem with other subsystems is given by 
\begin{equation}
w_i = \sum_{j=1}^N \Omega_{i,j} z_j,
\label{wi}    
\end{equation}
with constant matrices \(\Omega_{i,j} \in \mathbb{R}^{q_i \times s_j}\), which are only nonzero if subsystem $i$ is connected to subsystem $j$ via a physical link.

We make the following technical assumptions on the dynamics \eqref{eq:plant_i}, \eqref{wi}, $i=1,\ldots,N$.
\begin{assumption}\label{assump:local_SPR}
For each \(i=1,\ldots,N\), the triples \((A_i,B_i,C_i)\) and \((A_i,G_i,E_i)\) in \eqref{eq:plant_i} are minimal, and the transfer function matrices
\[
G_{p,i}(s)=C_i(s \mathbf{I} -A_i)^{-1}B_i~~\text{and}~~
G_{z,i}(s)=E_i(s\mathbf{I}-A_i)^{-1}G_i
\]
are \gls{spr}, where $s \in \mathbb{C}$ is the Laplace variable.

\hfill $\square$
\end{assumption}

\begin{assumption}\label{assump:interconnection_passive}
The interconnection \eqref{wi} of any two subsystems \eqref{eq:plant_i} is power-preserving, i.e.,
\[
\Omega_{i,j} + \Omega_{j,i}^\top = \mathbf{0}\quad \forall i=1,\ldots,N,\quad \forall j=1\ldots,N,\quad j\neq i.
\]
\hfill $\square$
\end{assumption}

Assumption~\ref{assump:local_SPR} ensures that each subsystem is internally stable, i.e., \(A_i\) is Hurwitz, and strictly passive in the input--output pairs \((u_i,y_i)\) and \((w_i,z_i)\) \cite{khalil2002nonlinear} \footnote{In practice, this may result from intrinsic plant properties or from a local pre-stabilizing controller, in which case \(u_i\) acts as a supervisory input to achieve objectives other than stability.}. Assumption~\ref{assump:interconnection_passive} guarantees that the coupling does not inject energy into the network and preserves passivity when subsystems are interconnected \cite{van2000l2}.

\subsection{Compact \gls{ncs} model}

Considering \eqref{eq:plant_i} and \eqref{wi}, for $i=1,\ldots,N$, let
\(x = \mathrm{col}(x_1,\dots,x_N) \in \mathbb{R}^{n}\),
\(u = \mathrm{col}(u_1,\dots,u_N) \in \mathbb{R}^{m}\),
\(y = \mathrm{col}(y_1,\dots,y_N) \in \mathbb{R}^{p}\),
\(z = \mathrm{col}(z_1,\dots,z_N) \in \mathbb{R}^{s}\),
and \(d = \mathrm{col}(d_1,\dots,d_N) \in \mathbb{R}^{n}\).
Moreover, let
\(A=\mathrm{blkdiag}(A_i),\;
B=\mathrm{blkdiag}(B_i),\;
C=\mathrm{blkdiag}(C_i),\; E=\mathrm{blkdiag}(E_i),\;
G=\mathrm{blkdiag}(G_i),\)
and \(\Omega=[\Omega_{i,j}]_{i,j=1}^N\).
By noting that \(w=\Omega z\) and \(z=Ex\),
then the overall \gls{ncs} model \eqref{eq:plant_i}, \eqref{wi}, $i=1,\ldots,N,$ can be written compactly as
\begin{equation}\label{eq:CompactModel}
\Sigma_{\mathrm P}:\;
\begin{cases}
\dot{x} = A x + B u + d + G\Omega E x,\\
y = C x,\\
z = E x.
\end{cases}
\end{equation}

We make the following assumption on the existence of a desired equilibrium point of the uncontrolled \gls{ncs} \eqref{eq:CompactModel}.

\begin{assumption}
With \(u =u_i^*= \mathbf{0}\), the interconnected system \eqref{eq:CompactModel} admits an exponentially stable equilibrium point
\(x^*\) with output \(y^*\). At equilibrium, each subsystem satisfies the  constraints $x_i^*\in \mathcal{X}_i$, $y_i^*\in \mathcal{Y}_i$ and $u_i^*\in \mathcal{U}_i$, where
\begin{subequations}\label{eq:calX_i_calYi_calUi}
\begin{align}
\mathcal{X}_i &= \{\, x_i\in \mathbb{R}^{n_i} \mid R_{\mathrm{eq},x,i} x_i = b_{x,i},\;
    R_{\mathrm{ineq},x,i} x_i \le h_{x,i} \,\},\\
\mathcal{Y}_i &= \{\, y_i\in\mathbb{R}^{p_i} \mid R_{\mathrm{eq},y,i} y_i = b_{y,i},\;
    R_{\mathrm{ineq},y,i} y_i \le h_{y,i} \,\},\\
\mathcal{U}_i &= \{\, u_i\in \mathbb{R}^{m_i} \mid R_{\mathrm{eq},u,i} u_i = b_{u,i},\;
    R_{\mathrm{ineq},u,i} u_i \le h_{u,i} \,\}
\end{align}
\end{subequations}
are convex sets \footnote{ $R_{\mathrm{eq},x,i}\in\mathbb{R}^{r_{x,i}\times n_i}$, $b_{x,i}\in\mathbb{R}^{r_{x,i}}$,
$R_{\mathrm{ineq},x,i}\in\mathbb{R}^{k_{x,i}\times n_i}$, $h_{x,i}\in\mathbb{R}^{k_{x,i}}$,
$R_{\mathrm{eq},y,i}\in\mathbb{R}^{r_{y,i}\times p_i}$, $b_{y,i}\in\mathbb{R}^{r_{y,i}}$,
$R_{\mathrm{ineq},y,i}\in\mathbb{R}^{k_{y,i}\times p_i}$, $h_{y,i}\in\mathbb{R}^{k_{y,i}}$,
$R_{\mathrm{eq},u,i}\in\mathbb{R}^{r_{u,i}\times m_i}$, $b_{u,i}\in\mathbb{R}^{r_{u,i}}$,
$R_{\mathrm{ineq},u,i}\in\mathbb{R}^{k_{u,i}\times m_i}$, $h_{u,i}\in\mathbb{R}^{k_{u,i}}$ are user-defined constraint parameters.
}.  
\hfill $\square$
\end{assumption}

The following lemma states that the  CPS model  \eqref{eq:CompactModel} is strictly shifted-passive.  This property is actively exploited for the subsequent  FTC design and corresponding closed-loop stability analysis.

\begin{lemma}\label{lemma:PlantPassivity}
Fix $0<\tau_1<2\min_i\{-\text{Re}(\lambda_i(A_\mathrm{p}))\}$. With $A_{\mathrm p}=A+G\Omega E$, there is a unique solution $P_1\succ 0$ to the Lyapunov equation
\begin{equation}\label{eq:P1_lyap}
A_{\mathrm p}^\top P_1 + P_1 A_{\mathrm p} = -\tau_1 P_1,
\end{equation}
and the plant dynamics \eqref{eq:CompactModel} are strictly shifted-passive with quadratic storage function
$S_{\mathrm p}(\tilde{x})= \tfrac12 \|\tilde{x}\|_{P_1}^2$ and with respect to the input--output pair $(\tilde{u},\tilde{y})$.
In particular, it holds that
\begin{equation}\label{eq:dissipation_inequality_PLANT}
\dot{S}_{\mathrm{p}}(\tilde{x}) = - \tau_1 \tilde{x}^\top P_1 \tilde{x} + \tilde{y}^\top \tilde{u},
\end{equation}
along any system trajectory.
\hfill $\square$
\end{lemma}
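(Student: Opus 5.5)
The plan is to work entirely with the shifted (incremental) dynamics and compute $\dot S_{\mathrm p}$ directly.

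\emph{Step 1 (shifted model).} Let $(x^*,y^*)$ be the equilibrium of \eqref{eq:CompactModel} with $u^*=\mathbf 0$ from the equilibrium assumption. Then $\mathbf 0=A_{\mathrm p}x^*+d$. Subtracting this from \eqref{eq:CompactModel} eliminates the constant $d$ and gives
\[
\dot{\tilde x}=A_{\mathrm p}\tilde x+B\tilde u,\qquad \tilde y=C\tilde x,
\]
with $A_{\mathrm p}=A+G\Omega E$.

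\emph{Step 2 (Lyapunov equation).} The equilibrium is exponentially stable, so $A_{\mathrm p}$ is Hurwitz. The restriction $0<\tau_1<2\min_i\{-\mathrm{Re}\,\lambda_i(A_{\mathrm p})\}$ is exactly the condition that $\bar A:=A_{\mathrm p}+\tfrac{\tau_1}{2}\mathbf I$ is Hurwitz. Equation \eqref{eq:P1_lyap} can be rewritten as $\bar A^\top P_1+P_1\bar A=\mathbf 0$. The linear operator $P\mapsto \bar A^\top P+P\bar A$ has eigenvalues $\lambda_i(\bar A)+\lambda_j(\bar A)$, all with strictly negative real part, so it is invertible. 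This gives existence and uniqueness.

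\emph{Main obstacle.} Read literally, the right-hand side in \eqref{eq:P1_lyap} is zero, so the unique solution is $P_1=\mathbf 0$, which is not $\succ 0$. I would therefore follow the argument with the standard interpretation: $P_1$ is the unique solution of $\bar A^\top P_1+P_1\bar A=-Q$ for some $Q\succ 0$. That solution is $P_1=\int_0^\infty e^{\bar A^\top t}Qe^{\bar A t}\,dt\succ 0$, and it satisfies $A_{\mathrm p}^\top P_1+P_1A_{\mathrm p}=-\tau_1P_1-Q\preceq-\tau_1P_1$. I would state explicitly where the wording has to be read this way.

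\emph{Step 3 (dissipation).} Differentiating $S_{\mathrm p}(\tilde x)=\tfrac12\|\tilde x\|^2_{P_1}$ along the shifted dynamics gives
\[
\dot S_{\mathrm p}=\tfrac12\tilde x^\top\bigl(A_{\mathrm p}^\top P_1+P_1A_{\mathrm p}\bigr)\tilde x+\tilde x^\top P_1B\tilde u .
\]
The first term equals (or, under the interpretation above, is bounded by) a negative multiple of $\tilde x^\top P_1\tilde x$. The exact coefficient is $-\tfrac{\tau_1}{2}$, so the factor $\tau_1$ in \eqref{eq:dissipation_inequality_PLANT} holds only after relabelling $\tau_1$.

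The supply term becomes $\tilde y^\top\tilde u$ only if $P_1B=C^\top$. This identity does not follow from the Lyapunov equation. It has to come from Assumption~\ref{assump:local_SPR} via the KYP lemma (block-diagonal $P_{1,i}$ with $P_{1,i}B_i=C_i^\top$), combined with Assumption~\ref{assump:interconnection_passive} so that the coupling term $G\Omega E$ is lossless.

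Securing $P_1B=C^\top$ for the same $P_1$ that solves \eqref{eq:P1_lyap} is the genuinely hard step. I would try first to enforce it as an additional constraint on the choice of $Q$.

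Absent that, the argument as outlined yields strict shifted passivity only in inequality form, with an adjusted rate, rather than the stated equality. I would flag this as the point where the statement needs correction.
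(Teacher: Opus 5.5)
Your three objections are correct, and the paper's proof does not resolve them. The paper argues that Assumptions~1--2 make $G_{\mathrm p}(s)=C(s\mathbf I-A_{\mathrm p})^{-1}B$ SPR, hence $A_{\mathrm p}$ Hurwitz. It then simply asserts that a unique $P_1\succ0$ solves \eqref{eq:P1_lyap} and calls the dissipation identity straightforward. But $A_{\mathrm p}+\tfrac{\tau_1}{2}\mathbf I$ is Hurwitz on the whole admissible range of $\tau_1$, so that equation has only the solution $P_1=\mathbf 0$. Moreover, the resulting factor would be $-\tfrac{\tau_1}{2}$, and the supply $\tilde y^\top\tilde u$ needs $P_1B=C^\top$. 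So the lemma is not provable as written.

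Your proposed repair, imposing $P_1B=C^\top$ as a constraint on $Q$ for the given $\tau_1$, is a KYP feasibility problem. For a minimal realization, a $P_1\succ0$ with $P_1B=C^\top$ and $A_{\mathrm p}^\top P_1+P_1A_{\mathrm p}\preceq-\tau_1P_1$ exists iff $G_{\mathrm p}(s-\tau_1/2)$ is positive real. The KYP lemma the paper cites (Lemma~6.3 of \cite{khalil2002nonlinear}) guarantees this for \emph{some} $\varepsilon>0$ when $G_{\mathrm p}$ is SPR and minimal. Zero feedthrough forces $W=0$, so $P_1B=C^\top$ and $A_{\mathrm p}^\top P_1+P_1A_{\mathrm p}=-L^\top L-\varepsilon P_1$, giving $\dot S_{\mathrm p}\le-\tfrac{\varepsilon}{2}\|\tilde x\|_{P_1}^2+\tilde y^\top\tilde u$. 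It does not give this for every $\tau_1$ below the Hurwitz margin. Consider two decoupled subsystems whose port transfer functions all equal the SPR function $\frac{s+2.5}{(s+1)(s+2)}$: the shifted function is positive real only for $\tau_1\le 1$, while the lemma admits all $\tau_1<2$. Uniqueness and the equality are also lost, though the closed-loop theorem only uses the inequality. (Also shift about an arbitrary equilibrium $(\bar x,\bar u)$, not only $(x^*,\mathbf 0)$, since the theorem shifts about the KKT point.)

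The step that genuinely fails, in your sketch and in the paper alike, is obtaining such a certificate for the interconnected plant from Assumptions~1--2. Your block-diagonal route needs, for each $i$, a single $P_{1,i}\succ0$ with $P_{1,i}B_i=C_i^\top$ \emph{and} $P_{1,i}G_i=E_i^\top$. Only then does $P_1G\Omega E+E^\top\Omega^\top G^\top P_1=E^\top(\Omega+\Omega^\top)E$. It also needs $\Omega_{ii}+\Omega_{ii}^\top=\mathbf 0$, which Assumption~2 does not cover. Assumption~1 only supplies separate certificates for the two ports, and these can be incompatible, in which case the conclusion fails outright. Take $N=2$, $A_i=-1$, $B_i=C_i=G_i=1$, $E_1=2$, $E_2=1$, $\Omega_{11}=\Omega_{22}=0$ and $\Omega_{12}=-\Omega_{21}=a$. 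All assumptions hold, and
\[
A_{\mathrm p}=\begin{bmatrix}-1 & a\\ -2a & -1\end{bmatrix}
\]
is Hurwitz. Yet the symmetric part of $G_{\mathrm p}(0)=-A_{\mathrm p}^{-1}$ has eigenvalues $(1\pm a/2)/(1+2a^2)$, which is indefinite for $|a|>2$. So $G_{\mathrm p}$ is not even positive real, and no storage function makes the plant passive from $\tilde u$ to $\tilde y$. Concretely, $B=C=\mathbf I$ forces $P_1=\mathbf I$, and $A_{\mathrm p}+A_{\mathrm p}^\top$ has eigenvalue $|a|-2>0$.

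A provable version needs an extra hypothesis: a common port certificate per subsystem, i.e.\ $P_{1,i}A_i+A_i^\top P_{1,i}\preceq-\varepsilon P_{1,i}$, $P_{1,i}B_i=C_i^\top$ and $P_{1,i}G_i=E_i^\top$, together with skew-symmetric $\Omega_{ii}$. Then $P_1=\mathrm{blkdiag}(P_{1,i})$ yields the dissipation inequality above with rate $\tfrac{\varepsilon}{2}$.
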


\begin{proof}
Assumptions~\ref{assump:local_SPR} and \ref{assump:interconnection_passive}
imply that \(\Sigma_{\mathrm P}\) in \eqref{eq:CompactModel} is a power-preserving interconnection of
\gls{spr} subsystems. Then, the transfer function matrix $G_\mathrm{p}(s)$ of the error dynamics, given by
\[
G_{\mathrm p}(s)= C (s\mathbf{I} - A_{\mathrm p})^{-1} B, 
\quad
\]
is \gls{spr}.  This implies in particular that \(A_{\mathrm p}\) is Hurwitz (see \cite[Lemmas~6.3 and 6.4]{khalil2002nonlinear}). Then there exists a unique matrix $P_1=P_1^\top>0$ that solves \eqref{eq:P1_lyap}. It is straightfoward to verify that, along system trajectories, the time-derivative of the storage function $S_\mathrm{p}$ satisfies \eqref{eq:dissipation_inequality_PLANT}.

\end{proof}

\section{Problem Statement and Solution Approach} \label{sec:ProbForm}
\subsection{Problem Statement}
Assume that at time \(t_\mathrm{f}\), the subsystem \(i_\mathrm{f}\) experiences a
fault (e.g., actuator failure or an additive disturbance), leading to the
reconfigured dynamics of \eqref{eq:plant_i}, \eqref{wi}, for the $i_\mathrm{f}$-th subsystem:
\begin{subequations}\label{eq:plant_i_f}
\begin{align}
\dot{\hat x}_{i_\mathrm{f}}
&= \hat A_{i_\mathrm{f}} \hat x_{i_\mathrm{f}}
 + \hat B_{i_\mathrm{f}} \hat u_{i_\mathrm{f}}
 + \hat d_{i_\mathrm{f}}
 + \hat G_{i_\mathrm{f}} \hat w_{i_\mathrm{f}},\\
\hat y_{i_\mathrm{f}} &= \hat C_{i_\mathrm{f}} \hat x_{i_\mathrm{f}},\\
\hat z_{i_\mathrm{f}} &= \hat E_{i_\mathrm{f}} \hat x_{i_\mathrm{f}},
\end{align}
with interconnection
\[
\hat w_{i_\mathrm{f}}
= \hat\Omega_{i_\mathrm{f},i_\mathrm{f}}\hat z_{i_\mathrm{f}}
  + \sum_{j\neq i_\mathrm{f}} \hat\Omega_{i_\mathrm{f},j} z_j,
\]
\end{subequations}
and with all variables and matrices  defined analogously to~\eqref{eq:plant_i}\footnote{In particular,
$\hat x_{i_\mathrm{f}}\!\in\mathbb{R}^{n_{i_\mathrm{f}}},\;
 \hat u_{i_\mathrm{f}}\!\in\mathbb{R}^{m_{i_\mathrm{f}}},\;
 \hat y_{i_\mathrm{f}}\!\in\mathbb{R}^{p_{i_\mathrm{f}}},\;
 \hat z_{i_\mathrm{f}}\!\in\mathbb{R}^{s_{i_\mathrm{f}}},\;
 \hat w_{i_\mathrm{f}}\!\in\mathbb{R}^{q_{i_\mathrm{f}}},
\hat d_{i_\mathrm{f}}\!\in\mathbb{R}^{n_{i_\mathrm{f}}},\;
 \hat A_{i_\mathrm{f}}\!\in\mathbb{R}^{n_{i_\mathrm{f}}\times n_{i_\mathrm{f}}},\;
 \hat B_{i_\mathrm{f}}\!\in\mathbb{R}^{n_{i_\mathrm{f}}\times m_{i_\mathrm{f}}}, 
\hat G_{i_\mathrm{f}}\!\in\mathbb{R}^{n_{i_\mathrm{f}}\times q_{i_\mathrm{f}}},\;
 \hat C_{i_\mathrm{f}}\!\in\mathbb{R}^{p_{i_\mathrm{f}}\times n_{i_\mathrm{f}}},\;
 \hat E_{i_\mathrm{f}}\!\in\mathbb{R}^{s_{i_\mathrm{f}}\times n_{i_\mathrm{f}}}, \hat\Omega_{i_\mathrm{f},j}\in\mathbb{R}^{q_{i_\mathrm{f}}\times s_j}.$}.
We assume that Assumptions~\ref{assump:local_SPR} and~\ref{assump:interconnection_passive}
remain valid for the faulty subsystem~\eqref{eq:plant_i_f}. Moreover, we consider that a fault may produce either one, or a combination, of the following situations:
\begin{itemize}
\item \textbf{Steady-state error:}
      \(\bar x_i\neq x_i^*\) or \(\bar{\hat x}_{i_\mathrm{f}}\neq x_{i_\mathrm{f}}^*\).
\item \textbf{Output error:}
      \(\bar y_i\neq y_i^*\) or \(\bar{\hat y}_{i_\mathrm{f}}\neq \hat y_{i_\mathrm{f}}^*\).
\item \textbf{Constraint violation:}
      \(\bar x_i\notin\mathcal{X}_i\),
      \(\bar{\hat x}_{i_\mathrm{f}}\notin\hat{\mathcal{X}}_{i_\mathrm{f}}\),
      \(\bar y_i\notin\mathcal{Y}_i\),
      \(\bar{\hat y}_{i_\mathrm{f}}\notin\hat{\mathcal{Y}}_{i_\mathrm{f}}\) \footnote{To simplify the notation, we henceforth drop the use of the symbol \(\hat{(\cdot)}\) for the faulty subsystem \eqref{eq:plant_i_f}.}.
\end{itemize}
Then, the problem we address in this paper is that of designing a dynamic, feedback control law for the control input $u$ that   computes a new feasible operating point that minimizes the error with respect to the pre-fault equilibrium---while satisfying all post-fault physical and operational constraints---and that  renders such an optimal equilibrium exponentially stable. 
Formally, let $J:\mathbb{R}^{n}\times\mathbb{R}^{p}\times\mathbb{R}^{m}\to\mathbb{R}$,  denote a cost function penalizing the error between a post-fault equilibrium triple $(\bar x, \bar y, \bar u)$ and the pre-fault desired equilibrium  \((x^*,y^*,u^*)\). Then our goal is to design a \gls{ftc} scheme consisting of a continuous controller for \(u=\mathrm{col}(u_i)\), such that the \gls{ncs} trajectories remain bounded and \(x \to \bar x = \mathrm{col}(\bar x_i)\), exponentially fast, as $t\to\infty$, where \(\bar x\) is the solution of the following steady-state optimization problem:
\begin{equation}\label{eq:OptimizationProblem}
\begin{aligned}
\min_{\bar x,\bar y,\bar u}\;& J(\bar x,\bar y,\bar u)\\
\text{s.t.}\quad
0 &= (A+G\Omega E)\bar x + B\bar u + d,\\
\bar y &= C\bar x,\\
\bar x &\in\mathcal{X},\;
\bar y\in\mathcal{Y},\;
\bar u\in\mathcal{U},
\end{aligned}
\end{equation}
where  \(\mathcal{X}=\prod_{i=1}^N \mathcal{X}_i\),
\(\mathcal{Y}=\prod_{i=1}^N \mathcal{Y}_i\),
and \(\mathcal{U}=\prod_{i=1}^N \mathcal{U}_i\), with $\mathcal{X}_i$, $\mathcal{Y}_i$ and $\mathcal{U}_i$ defined in \eqref{eq:calX_i_calYi_calUi}.  A simple choice for $J$ is a quadratic function of the form
\[
J(\bar x, \bar y, \bar u)=\tfrac12 \|\bar{x}-x^*\|_{K_x}^{2} + \tfrac12 \|\bar{y}-y^*\|_{K_y}^{2}+ \tfrac12 \|\bar{u}-u^*\|_{K_u}^{2} ,
\]
with \(K_x\succ 0\), \(K_y\succ 0\), and \(K_u\succ 0\). Nonetheless, in our analysis other type of cost functions can be considered as long as they satisfy a strong convexity assumption (see Assumption~\ref{assump:uConvex} in Section~\ref{sec:Controller} for details.)

\subsection{Solution Approach}

Following \cite{claudio_arjan,michele}, the approach we employ to solve the considered control problem consists in formulating a continuous-time   Aug-PDGD associated to \eqref{eq:OptimizationProblem}, and, after doting it with suitable input channels, use it as a dynamic feedback controller in a CbI~\cite{CBI_OrtegaArjan} fashion. Unlike \cite{claudio_arjan,michele}, we employ the augmented version rather than the standard PDGD, leveraging the theoretical and numerical benefits discussed in \cite{qu_li_augPD}. Moreover, the considered CbI-based design differs from the time-scale separation approach in \cite{Dall’Anese} in the sense that our scheme actively exploits both the plant's and the augmented PDGD's strict shifted-passivity properties, which enables the explicit construction of a Lyapunov function for establishing exponential stability of the closed-loop system equilibrium \footnote{Other benefits of using a dynamic controller based on the Aug-PDGD, instead of using a static feedforward input $u=\bar u$ that solves \eqref{assump:uConvex}, is that it yields a feedback-optimization architecture that adapts online, improving robustness and performance and enabling real-time adjustment of $\bar u$ under disturbances~\cite{opt-alg-robust-fb}.}.

\section{Proposed Active Primal-Dual FTC Scheme} \label{sec:Controller}

\subsection{Derivation of the Aug-PDGD}

For deriving the Aug-PDGD, we begin by reformulating  the optimization problem \eqref{eq:OptimizationProblem} as follows:
\begin{subequations}\label{eq:FTCObjectives3}
\begin{align}
\begin{split}
\min_{\bar{\xi}} \quad & J (\bar{\xi})     \\ 
\text{subject to} \quad 
& R_{\text{eq}}\,\bar{\xi}  = b, \\ 
& R_{\text{ineq}}\,\bar{\xi}  \leq h,
\end{split}
\end{align}
where 
\begin{equation}
\bar{\xi}=\text{col}(\bar x,\bar y,\bar u)\in \mathbb{R}^{n_\xi=n+p+m},    
\end{equation}
\begin{align}\label{eq:R_block_compact}
R_{\mathrm{eq}} &=
\begin{bmatrix}
    -C & \mathbf{I} & \mathbf{0} \\
    A + G\Omega E & \mathbf{0} & B \\
    R_{\mathrm{eq},x} & \mathbf{0} & \mathbf{0} \\
    \mathbf{0} & R_{\mathrm{eq},y} & \mathbf{0} \\
    \mathbf{0} & \mathbf{0} & R_{\mathrm{eq},u}
\end{bmatrix} \in \mathbb{R}^{n_{\text{eq}}\times n_\xi}, \\
% \qquad
R_{\mathrm{ineq}} &=
\begin{bmatrix}
    R_{\mathrm{ineq},x} & \mathbf{0} & \mathbf{0} \\
    \mathbf{0} & R_{\mathrm{ineq},y} & \mathbf{0} \\
    \mathbf{0} & \mathbf{0} & R_{\mathrm{ineq},u}
\end{bmatrix}\in \mathbb{R}^{n_{\text{ineq}}\times n_\xi},
\\
b & \coloneqq \mathrm{col}(\mathbf{0},\,-d,\,b_x,\,b_y,\,b_u)\in \mathbb{R}^{n_{\text{eq}}},\\
h & \coloneqq \mathrm{col}(h_x,\,h_y,\,h_u)\in \mathbb{R}^{n_{\text{ineq}}},
\end{align}
\end{subequations}
with $n_{\text{eq}}=p+n+r_x+r_y+r_u$, $n_{\text{ineq}}= k_x+k_y+k_u$,  $R_{\mathrm{eq},x} \coloneqq
\mathrm{blkdiag}(R_{\mathrm{eq},x,1},\dots,R_{\mathrm{eq},x,N})$, $b_x \coloneqq \mathrm{col}(b_{x,1},\dots,b_{x,N})$, $R_{\mathrm{ineq},x} \coloneqq
\mathrm{blkdiag}(R_{\mathrm{ineq},x,1},\dots,R_{\mathrm{ineq},x,N})$, and $h_x \coloneqq \mathrm{col}(h_{x,1},\dots,h_{x,N})$.
The remaining block matrices and vectors, namely $R_{\mathrm{eq},y}$, $R_{\mathrm{eq},u}$, $R_{\mathrm{ineq},y}$, $R_{\mathrm{ineq},u}$, $b_y$, $b_u$, $h_y$ and $h_u$ are defined analogously.

The following technical assumptions about the optimization problem~\eqref{eq:FTCObjectives3} are instrumental for showing the existence of a unique solution to \eqref{eq:FTCObjectives3}, as well as for establishing  strict shifted-passivity properties for the corresponding Aug-PDGD with respect to a  suitable input-output pair.

\begin{assumption} \label{assump:uConvex}
    The cost function $J$ in \eqref{eq:OptimizationProblem} is twice continuously differentiable, $\mu$-strongly convex, and $\ell$-smooth in its arguments, for some $0<\mu\leq \ell$. 
\hfill $\square$
\end{assumption}    
Assumption~\ref{assump:uConvex} implies that, in particular, for all $\xi_1,\xi_2\in\mathbb{R}^{n+m+p}$, it holds that
\begin{equation*}
\mu\|\xi_1-\xi_2\|^2 \;\le\;
\big\langle \nabla J(\xi_1)-\nabla J(\xi_2),\, \xi_1-\xi_2 \big\rangle
\;\le\; \ell \|\xi_1-\xi_2\|^2,
\end{equation*}
or, equivalently, $\mu \mathbf{I} \preceq \nabla^2 J(\xi) \preceq \ell \mathbf{I}$ for all $\xi\in\mathbb{R}^{n+m+p} $.

\begin{assumption}\label{assump:BoundOnR}
Let $n_c \coloneqq n_{\mathrm{eq}} + n_{\mathrm{ineq}}$. The stacked constraint matrix
\begin{align}
R \;=\;
\begin{bmatrix}
R_{\mathrm{eq}} \\[2pt]
R_{\mathrm{ineq}}
\end{bmatrix}
    \label{eq:R_stacked}
\end{align}
has full row rank.
\hfill $\square$
\end{assumption}
With Assumption~\ref{assump:BoundOnR}, there exist constants $0 < \kappa_1 \le \kappa_2$ such that
\begin{align}
\kappa_1 \mathbf{I} \;\le\; R R^{\top} \;\le\; \kappa_2 \mathbf{I},    \label{eq:BoundOnR}
\end{align}
where $RR^{\top}\in\mathbb{R}^{n_c\times n_c}$.

In view of Assumption~\ref{assump:uConvex} and the fact that the constraints are affine in the decision variables,  the optimization problem~\eqref{eq:FTCObjectives3} is convex. Then, it is solvable if and only if the following \gls{kkt} conditions~\cite{boyd_book} are solvable:
\begin{align}\label{eq:KktConditions}
    \begin{split}
        \nabla J(\bar{\xi}) + R_{\text{eq}}^\top \bar{\nu}_{\text{eq}} + R_{\text{ineq}}^\top \bar{\nu}_{\text{ineq}} &= \mathbf{0},\\
        R_{\text{eq}}\,\bar{\xi}  - b &= \mathbf{0} , \\ 
        R_{\text{ineq}}\,\bar{\xi} - h &\leq \mathbf{0},\\
        \bar{\nu}_{\text{ineq}}  &\geq \mathbf{0}, \\       \bar{\nu}_{\text{ineq}}^\top\!\big(R_{\text{ineq}}\,\bar{\xi} - h\big) &= \mathbf{0},
    \end{split}
\end{align}
where $\bar{\nu}_{\text{eq}}  \in \mathbb{R}^{n_{\text{eq}}}$ and $\bar{\nu}_{\text{ineq}} \in \mathbb{R}^{n_{\text{ineq}}}_{\geq 0}$ are dual variables. Under Assumptions~\ref{assump:uConvex} and~\ref{assump:BoundOnR}, the \gls{kkt}
conditions~\eqref{eq:KktConditions} admit a unique solution \cite{boyd_book}.
In order to define the Aug-PDGD associated to \eqref{eq:FTCObjectives3} and \eqref{eq:KktConditions}, we  follow \cite{qu_li_augPD} and introduce the  augmented Lagrangian function 
\begin{align}
\mathcal{L}_\rho(\bar \xi,\bar \nu_{\text{eq}},\bar \nu_{\text{ineq}})
&= J(\bar \xi)
+ \bar \nu_{\text{eq}}^\top(R_{\text{eq}}\bar \xi - b)
\nonumber \\
& + \sum_{i=1}^{n_{\text{ineq}}} 
H_\rho\Big( (R_{\text{ineq}}\bar \xi - h)_i,\; (\bar \nu_{\text{ineq}})_i \Big),
\label{eq:AugLag_FTC}
\end{align}
where $H_\rho$ is defined as
\begin{equation}\label{eq:Hrho_FTC} 
H_\rho (a,b)  =\begin{cases} ab+\frac{\rho}{2}a^2 & \rho a+b\ge 0, \\[6pt] -\frac{b^2}{2\rho}, & \rho a+b< 0, \end{cases}
\end{equation}
with $\rho>0$ being a free parameter.
Then, the Aug-PDGD is defined as follows~\cite{qu_li_augPD}:
\begin{subequations}\label{eq:primal_dual_NON_EXPLICIT}
\begin{align}
        \dot \xi & = -\nabla_\xi \mathcal{L}(\xi,\nu_\mathrm{eq},\nu_\mathrm{ineq}),\\
    \dot \nu_\mathrm{eq} & =  \eta \nabla_{\nu_\mathrm{eq}} \mathcal{L}(\xi,\nu_\mathrm{eq},\nu_\mathrm{ineq}),\\
        \dot \nu_\mathrm{ineq} & = \eta
        \nabla_{\nu_\mathrm{ineq}} \mathcal{L}(\xi,\nu_\mathrm{eq},\nu_\mathrm{ineq}),
\end{align}
\end{subequations}
where $\xi(t)=\text{col}(\hat{x}(t),\hat{y}(t),\hat{u}(t))$, $\nu_\mathrm{eq}(t)$ and $\nu_\mathrm{ineq}(t)$ are the virtual states  associated to the primal ($\bar \xi=\text{col}(\bar x,\bar y, \bar u)$) and dual variables ($\bar \nu_\mathrm{eq}$, $\bar \nu_\mathrm{ineq})$,  and $\eta>0$ is a design parameter.  Note that under Assumptions~\ref{assump:uConvex} and~\ref{assump:BoundOnR}, the Aug-PDGD \eqref{eq:primal_dual_NON_EXPLICIT} admits a unique equilibrium, which is   in one-to-one correspondence with the unique solution of the \gls{kkt} conditions~\eqref{eq:KktConditions}.  

We underscore that in~\cite{qu_li_augPD}, a PDGD with affine equality constraints and an \gls{a-pdgd} with affine inequality constraints are analyzed separately, and it is noted that they can be combined. Building on their results, here we consider an Aug-PDGD that deals with affine equality and inequality constraints simultaneously.  

To write \eqref{eq:primal_dual_NON_EXPLICIT} in explicit form, note that
\begin{subequations}
    \begin{align*}
   \nabla_\xi \mathcal{L}(\xi,\nu_\mathrm{eq},\nu_\mathrm{ineq}) & =      \nabla_\xi J(\xi)+R_\mathrm{eq}^\top \nu_\mathrm{eq}\\
   & \phantom{=}+ \sum_{i=1}^{n_\mathrm{ineq}} \nabla_{\xi} H_\rho\big((R_{\text{ineq}}\xi - h)_i,(\nu_{\text{ineq}})_i\big),\\
 \nabla_{\nu_\mathrm{eq}} \mathcal{L}(\xi,\nu_\mathrm{eq},\nu_\mathrm{ineq}) & = R_\mathrm{eq}\xi-b,\\
 \nabla_{\nu_\mathrm{ineq}} \mathcal{L}(\xi,\nu_\mathrm{eq},\nu_\mathrm{ineq}) & =  \sum_{i=1}^{n_\mathrm{ineq}} \nabla_{\nu_{\text{ineq}}} H_\rho\big((R_{\text{ineq}}\xi - h)_i,(\nu_{\text{ineq}})_i\big).
    \end{align*}
\end{subequations}
%%%
To  compute the remaining terms with derivatives of $H_\rho$, we introduce, as in \cite{qu_li_augPD}, the vector
\begin{align}\label{eq:vector_g}
g(\xi,\nu_{\text{ineq}})
\;\coloneqq\;
\max\!\Big(
\nu_{\text{ineq}} + \rho(R_{\text{ineq}}\xi - h),\; 0
\Big), 
\end{align}
where the $\max(\cdot)$ is taken component-wise. Then, 
\begin{center}
\scalebox{0.95}{$
\begin{aligned}
 \sum_{i=1}^{n_\mathrm{ineq}} \nabla_{\xi} H_\rho\big((R_{\text{ineq}}\xi - h)_i,(\nu_{\text{ineq}})_i\big)
&= R_{\text{ineq}}^\top g(\xi,\nu_{\text{ineq}}),
\\[2mm]
 \sum_{i=1}^{n_\mathrm{ineq}} \nabla_{\nu_{\text{ineq}}} H_\rho\big((R_{\text{ineq}}\xi - h)_i,(\nu_{\text{ineq}})_i\big)
&= \frac{1}{\rho}\big(g(\xi,\nu_{\text{ineq}}) - \nu_{\text{ineq}}\big).
\end{aligned}
$}
\end{center}
%%%%
Consequently, the Aug-PDGD attains the following form:
\begin{subequations}\label{eq:PdController}
\begin{equation}
   \dot \theta=f(\theta),\quad \theta=\mathrm{col}({\xi},\,{\nu}_{\mathrm{eq}},\,{\nu}_{\mathrm{ineq}})\in \mathbb{R}^{n_\theta=n_\xi+n_\mathrm{eq}+n_\mathrm{ineq}},
\end{equation}
with
\begin{equation}
f(\theta)=\begin{pmatrix}
-\nabla J({\xi})
   - R_{\text{eq}}^\top {\nu}_{\text{eq}}
   - R_{\text{ineq}}^\top g\big({\xi},{\nu}_{\text{ineq}}\big)\\
   \eta (R_{\text{eq}}{\xi} - b)\\
   \frac{\eta}{\rho}\Big(
    g\big({\xi},\nu_{\text{ineq}}\big)
    - {\nu}_{\text{ineq}}
\Big)
\end{pmatrix}.
\end{equation}
\end{subequations}

\subsection{Aug-PDBD-based CbI Framework}

To apply the CbI framework, we first equip the \gls{a-pdgd} \eqref{eq:PdController} with an auxiliary input $v_\mathrm{pd}$ as follows:
\begin{align}\label{eq:PdControllerWithInput}
\Sigma_{\mathrm c}:\quad
\dot{\theta} = f(\theta) + B_{\mathrm{pd}} v_{\mathrm{pd}},
\end{align}
where  \(B_{\mathrm{pd}}\) is the 
associated input matrix (to be defined). Next, we move on to show that \eqref{eq:PdControllerWithInput} is strictly shifted-passive with quadratic storage function, and with respect to a suitable output. In the sequel, the passive input-output pairs of both the CPS model \eqref{eq:CompactModel} and the Aug-PDGD \eqref{eq:PdControllerWithInput} will be used to define a power-preserving interconnection between the two systems.

To establish the shifted-passivity property for \(\Sigma_{\mathrm c}\), we consider the quadratic storage function (candidate) \(S_{\mathrm{c}} = \tilde{\theta}^{\top} P_2 \tilde{\theta}\), where $P_2$ is a symmetric matrix defined as
\begin{equation}\label{eq:matrix_P_2}
 P_2=   \begin{bmatrix}
\eta c\, \mathbf{I} & \eta R_{\mathrm{eq}}^{\top} & \eta R_{\mathrm{ineq}}^{\top} \\
\eta R_{\mathrm{eq}} & c\, \mathbf{I} & \mathbf{0}\\
\eta R_{\mathrm{ineq}} & \mathbf{0} & c\, \mathbf{I}
\end{bmatrix}\in\mathbb{R}^{n_\theta \times n_\theta},
\end{equation}
with $R_{\mathrm{eq}}$ and $R_{\mathrm{ineq}}$ as in \eqref{eq:R_block_compact} and 
\begin{subequations}\label{eq:ChoiceOfC_all}
\begin{align}
c &= \max\!\bigl(c_1,\,c_2,\,c_3\bigr), \label{eq:ChoiceOfC}\\[1mm]
c_1 &\ge
\begin{cases}
\displaystyle \eta\,\sqrt{\frac{\kappa_{2}}{(\eta - \epsilon)(1-\epsilon)}}, 
& \eta \ge 1,\\[3mm]
\displaystyle \sqrt{\frac{\eta\,\kappa_{2}}{(1 - \epsilon)(1-\eta \epsilon)}}, 
& 0 < \eta < 1,
\end{cases}
\label{eq:ChoiceOfC1}\\[1mm]
c_2 &\ge \kappa_2 \rho, \label{eq:ChoiceOfC2}\\[1mm]
c_3 &\ge
20\,\ell\,
\Big[\max\!\left(\tfrac{\rho \kappa_{2}}{\mu},\, \tfrac{\ell}{\mu}\right)\Big]^{2}
\Big[\max\!\left(\tfrac{\eta}{\ell\rho},\, \tfrac{\ell}{\mu}\right)\Big]^{2}
\tfrac{\kappa_{2}}{\kappa_{1}},
\label{eq:ChoiceOfC3}
\end{align}
\end{subequations}
with $0<\epsilon<1$ being a free parameter, and where $\ell$ and $\mu$ are the smoothness and strong-convexity constants of $J$ from Assumption~\ref{assump:uConvex}, $\kappa_1,\kappa_2$ are the spectral bounds of $RR^\top$ from Assumption~\ref{assump:BoundOnR}, and $\rho>0$ is the free parameter introduced in \eqref{eq:Hrho_FTC}. 

\begin{lemma}\label{lemma:PdPassivity}
Fix $\eta>0$ and $\epsilon\in(0,1)$,  let Assumptions~\ref{assump:uConvex} and \ref{assump:BoundOnR} hold, and  fix $c>0$ according to~\eqref{eq:ChoiceOfC_all}. Then,  $P_2$ in \eqref{eq:matrix_P_2} satisfies
\begin{equation}\label{eq:lower_bound_P_2}
P_2 \;\succeq\; \epsilon\,c\,\min(\eta,1)\,\mathbf I.
\end{equation}
Moreover, the \gls{a-pdgd}~\eqref{eq:PdControllerWithInput} is strictly shifted-passive   with quadratic storage function \(S_{\mathrm{c}}(\tilde{\theta})=\tilde{\theta}^{\top}P_2\tilde{\theta}\) and with respect to the input--output pair
\((\tilde{v}_{\mathrm{pd}},\, 2\,B_{\mathrm{pd}}^{\top} P_2 \tilde{\theta})\).
In particular,  it holds that
\[
\dot{S}_{\mathrm{c}}
\le -\tau_2\, \tilde{\theta}^{\top} P_2\, \tilde{\theta}
+ 2\,\tilde{\theta}^{\top} P_2 B_{\mathrm{pd}} \tilde{v}_{\mathrm{pd}}
\]
along any system trajectory, with 
\begin{align}
\tau_2 &\coloneqq \frac{\eta \kappa_1}{2c}.
\label{eq:tau2_def}    
\end{align}
\hfill $\square$
\end{lemma}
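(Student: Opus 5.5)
The plan is to split the claim into two parts: a purely algebraic bound on $P_2$, and a dissipation inequality that will largely be imported from the exponential-stability analysis of the Aug-PDGD in \cite{qu_li_augPD}.

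\textbf{Step 1: the lower bound \eqref{eq:lower_bound_P_2}.} Write $P_2=\begin{bmatrix}\eta c\mathbf I & \eta R^\top\\ \eta R & c\mathbf I\end{bmatrix}$, using the stacked $R$ from \eqref{eq:R_stacked} and the dual vector $\nu=\mathrm{col}(\nu_{\mathrm{eq}},\nu_{\mathrm{ineq}})$. For any $\mathrm{col}(a,b)$, bound the cross term with Young's inequality and $\|R^\top b\|^2\le\kappa_2\|b\|^2$:
\[
2\eta\, a^\top R^\top b \ge -\delta\eta\|a\|^2-\frac{\eta\kappa_2}{\delta}\|b\|^2 .
\]
The quadratic form is then at least $\eta(c-\delta)\|a\|^2+(c-\eta\kappa_2/\delta)\|b\|^2$. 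We want both coefficients to be at least $\epsilon c\min(\eta,1)$.
\begin{itemize}
\item For $\eta\ge1$, take $\delta=(1-\epsilon)c$ on the primal side. The dual side then needs $c^2(1-\epsilon)\ge\eta\kappa_2/(1-\epsilon)$, or a slightly different split of $\delta$. This is exactly where the factor $(\eta-\epsilon)(1-\epsilon)$ in \eqref{eq:ChoiceOfC1} enters, so I will tune $\delta$ to reproduce that constant.
\item For $\eta<1$, do the symmetric choice with the target $\epsilon c\eta$.
\end{itemize}
Equivalently, one can use a Schur complement: $P_2-\epsilon c\min(\eta,1)\mathbf I\succeq0$ iff $c(1-\epsilon')\mathbf I\succeq \frac{\eta^2}{c(\eta-\epsilon'')}RR^\top$, which follows from $RR^\top\preceq\kappa_2\mathbf I$ and $c\ge c_1$.

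\textbf{Step 2: the dissipation inequality.} Differentiate $S_{\mathrm c}$ along \eqref{eq:PdControllerWithInput}. At the equilibrium $f(\bar\theta)=0$ and $\bar v_{\mathrm{pd}}$ is constant with $B_{\mathrm{pd}}\bar v_{\mathrm{pd}}=0$ (or absorbed into the shift), so
\[
\dot S_{\mathrm c}=2\tilde\theta^\top P_2\bigl(f(\theta)-f(\bar\theta)\bigr)+2\tilde\theta^\top P_2B_{\mathrm{pd}}\tilde v_{\mathrm{pd}} .
\]
The input term already has the required passive form. It therefore remains to show
\[
2\tilde\theta^\top P_2\bigl(f(\theta)-f(\bar\theta)\bigr)\le-\tau_2\tilde\theta^\top P_2\tilde\theta .
\]
This is the core estimate of \cite{qu_li_augPD}, and I will reproduce it for the combined equality/inequality case.

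Use the mean-value representations
\[
\nabla J(\xi)-\nabla J(\bar\xi)=H\tilde\xi,\qquad \mu\mathbf I\preceq H\preceq\ell\mathbf I,
\]
\[
g(\xi,\nu_{\mathrm{ineq}})-g(\bar\xi,\bar\nu_{\mathrm{ineq}})=\Gamma\bigl(\tilde\nu_{\mathrm{ineq}}+\rho R_{\mathrm{ineq}}\tilde\xi\bigr),
\]
where $\Gamma$ is diagonal with entries in $[0,1]$ (max is monotone and 1-Lipschitz). Then $f(\theta)-f(\bar\theta)=M(t)\tilde\theta$ for a time-varying matrix $M$ built from $H,\Gamma,R_{\mathrm{eq}},R_{\mathrm{ineq}}$. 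The claim reduces to the matrix inequality
\[
P_2M+M^\top P_2\preceq-\tau_2P_2 \quad\text{uniformly over all admissible } H,\Gamma .
\]
Expand block by block:
\begin{itemize}
\item The $\eta c\,\mathbf I$ block against the primal dynamics gives $-2\eta c\,\tilde\xi^\top H\tilde\xi$ plus dissipative $\Gamma$ terms.
\item The off-diagonal $\eta R$ blocks generate $-\eta^2\tilde\nu^\top RR^\top\tilde\nu\le-\eta^2\kappa_1\|\tilde\nu\|^2$ on the equality/active duals, which supplies dual damping.
\item For inactive inequality components ($\Gamma_{ii}=0$), the $\frac{\eta}{\rho}(g-\nu)$ term yields $-\frac{\eta}{\rho}c\|\tilde\nu_i\|^2$. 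Here $c\ge c_2=\kappa_2\rho$ ensures this dominates the cross terms.
\end{itemize}

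\textbf{Main obstacle.} The remaining indefinite cross terms are of the form $\eta^2\tilde\nu^\top R H\tilde\xi$ and $\rho$-weighted $\Gamma$ couplings. Following \cite{qu_li_augPD}, I will bound them by Young's inequality with weights chosen so the primal term $-\eta c\mu\|\tilde\xi\|^2$ absorbs them. This is what forces the large constant $c_3$ in \eqref{eq:ChoiceOfC3}, through the ratios $\ell/\mu$, $\rho\kappa_2/\mu$, $\eta/(\ell\rho)$ and $\kappa_2/\kappa_1$. The outcome should be
\[
P_2M+M^\top P_2\preceq-\tfrac{\eta\kappa_1}{2}\,\mathrm{blkdiag}(\ast)\preceq-\frac{\eta\kappa_1}{2c}P_2 .
\]
The last step uses $P_2\preceq 2c\max(\eta,1)\mathbf I$, together with a careful comparison of the primal and dual blocks, to recover exactly $\tau_2=\eta\kappa_1/(2c)$. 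Tracking constants in this Young-inequality bookkeeping, while handling the switching $\Gamma$ uniformly, is where I expect the real work to lie. If a direct approach is too messy, I will instead invoke the corresponding lemma of \cite{qu_li_augPD} for each constraint type and combine them, using the block structure of $P_2$.
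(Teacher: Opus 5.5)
Your Step~1 and the set-up of Step~2 match the paper. That includes the Schur-complement bound on $P_2$; if you use Young's inequality instead, take $\delta=c(\eta-\epsilon)/\eta$ for $\eta\ge1$, since your first choice $\delta=(1-\epsilon)c$ only reproduces \eqref{eq:ChoiceOfC1} when $\eta\le1$. It also includes the mean-value matrix $H$, the diagonal $\Gamma$ with entries in $[0,1]$, and the reduction to $Q:=-F^\top P_2-P_2F-\tau_2P_2\succeq\mathbf 0$ uniformly in $H,\Gamma$. But that uniform inequality \emph{is} the content of the lemma, and you only announce it. The ingredients you list would not close it.

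The dual block of $Q$ is $\eta(\Gamma_1RR^\top+RR^\top\Gamma_1)+\frac{2\eta c}{\rho}(\mathbf I-\Gamma_1)-\frac{\eta\kappa_1}{2}\mathbf I$ with $\Gamma_1=\mathrm{blkdiag}(\mathbf I,\Gamma)$. Here $\Gamma_1A+A\Gamma_1$ is indefinite in general even for $A\succ\mathbf 0$: with $\Gamma_1=\mathrm{diag}(1,0)$ its $(2,2)$ entry vanishes while its $(1,2)$ entry is $A_{12}$. So ``active duals get damping $\eta\kappa_1$, inactive ones get $\eta c/\rho$'' says nothing about mixed patterns or fractional $\gamma_i$, which is exactly where the difficulty lies. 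The missing idea is Proposition~\ref{prop:technical_lemma_na_li}: this block is affine in $\Gamma_1$, so it suffices to check the $0/1$ vertices. There $c\ge\kappa_2\rho$ gives $\succeq\frac32\eta RR^\top$, hence $Q_2\succeq\eta RR^\top$.

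Your fallback of applying the equality-only and inequality-only results of \cite{qu_li_augPD} separately and ``combining'' them also fails. The reason is that $P_2$ and the stacked $RR^\top$ couple $\tilde\nu_{\mathrm{eq}}$ and $\tilde\nu_{\mathrm{ineq}}$ through $R_{\mathrm{eq}}R_{\mathrm{ineq}}^\top$. The paper unifies the two by treating equality rows as permanently active constraints, $\Gamma_1=\mathrm{blkdiag}(\mathbf I,\Gamma)$, $\Gamma_2=\mathrm{blkdiag}(\mathbf 0,\Gamma)$.

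The rest of your bookkeeping is also off in ways that matter. The $\eta^2$ terms do not provide dual damping. The term $-2\eta^2R^\top\Gamma_1R$ sits in the \emph{primal} block of $Q$ with the harmful sign. It is this term, together with the $\eta HR^\top$ coupling, that $2\eta cH\succeq 2\eta c\mu\mathbf I$ must dominate, and that is what produces $c_3$.

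The final conversion through $P_2\preceq 2c\max(\eta,1)\mathbf I$ also cannot deliver $\tau_2=\eta\kappa_1/(2c)$ for large $\eta$. Since $\lambda_{\max}(P_2)\ge\eta c$, an isotropic conversion needs $-(P_2F+F^\top P_2)\succeq\frac{\eta^2\kappa_1}{2}\mathbf I$. However, whenever $\Gamma_1=\mathbf I$ (e.g.\ no inequality constraints), restricting to dual directions shows the smallest eigenvalue of $-(P_2F+F^\top P_2)$ is at most $2\eta\lambda_{\min}(RR^\top)$. That equals $2\eta\kappa_1$ when the bound is tight, so this route fails for $\eta>4$.

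The paper avoids this loss by keeping $-\tau_2P_2$ inside $Q$. The dual block then loses only $\tau_2c=\eta\kappa_1/2$, and the $\tau_2\eta R$ coupling enters $Q_3$. It then uses $Q_2^{-1}\preceq\frac1\eta(RR^\top)^{-1}$ and $R^\top(RR^\top)^{-1}R\preceq\mathbf I$ to obtain $Q_1-Q_3Q_2^{-1}Q_3^\top\succeq\Delta(c)\mathbf I$ with $\Delta(c)>0$ under \eqref{eq:ChoiceOfC3}. Without this step, or an equally explicit computation, the dissipation inequality with the stated $\tau_2$ is not established.
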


\begin{proof}
See Appendix \ref{ProofLemma:PdPassivity}. 
\end{proof}

In view of Lemmas~\ref{lemma:PlantPassivity} and \ref{lemma:PdPassivity},  we are in position to specify the interconnection law between the 
\gls{ncs}~\eqref{eq:CompactModel} and the \gls{a-pdgd} with input 
\eqref{eq:PdControllerWithInput} as follows:
\begin{subequations}\label{eq:Ic}
\begin{empheq}[left=\Sigma_{\mathrm I}:\;\empheqlbrace]{align}
u &= M_u \theta, \label{subeq:IcU}\\
v_{\mathrm{pd}} &= -\bigl(y - M_y \theta \bigr), \label{subeq:IcV}
\end{empheq}
\end{subequations}
where
\begin{subequations}\label{eq:M_u_and_M_y}
    \begin{align}
        M_u & = 2B_{\mathrm{pd}}^{\top} P_2=\begin{bmatrix}
            \mathbf{0}_{m\times n+p}  &  \mathbf{I}_{m\times m} &  \mathbf{0}_{m\times n_\mathrm{eq}+n_\mathrm{ineq}} 
        \end{bmatrix}, \label{eq:M_u}\\
    M_y & = \begin{bmatrix}
 \mathbf{0}_{p\times n} &  \mathbf{I}_{p\times p} &  \mathbf{0}_{p\times m+n_\mathrm{eq}+n_\mathrm{ineq}} 
    \end{bmatrix},
    \end{align}
\end{subequations}
with $B_{\mathrm{pd}}$ and $P_2$  defined in~\eqref{eq:PdControllerWithInput} and~\eqref{eq:matrix_P_2}, respectively. With the proposed choice of $M_u$, it is straightforward to verify that $M_u \theta = \hat{u}$. Then, \eqref{eq:Ic} implies that the input to the  \gls{ncs}~\eqref{eq:CompactModel} is assigned to be equal to the passive output of the Aug-PDGD.
Moreover, \(M_y\) is such that \(M_y\theta=\hat y\), where $\hat y$ is the Aug-PDGD estimate of the passive output \(y\) of the CPS dynamics~\eqref{eq:CompactModel}. Then, \eqref{eq:Ic} implies that $v_\mathrm{pd}$ vanishes at equilibrium, making the closed-loop system's equilibrium coincide with the solution of the \gls{kkt} conditions~\eqref{eq:KktConditions}. In essence, \eqref{eq:Ic} is a feedback, power-preserving interconnection between two (strictly) shifted-passive systems. Finally, note that $B_\mathrm{pd}$ can be computed as $B_\mathrm{pd}=\frac{1}{2}P_2^{-1}M_u^\top.$

\section{Conditions for Closed-Loop Stability} \label{sec:ClosedLoopStab}

In this section, we provide sufficient tuning conditions for the controller~\eqref{eq:PdControllerWithInput},~\eqref{eq:Ic} under which the overall closed-loop system, conformed by \eqref{eq:CompactModel}, \eqref{eq:PdControllerWithInput} and \eqref{eq:Ic},  possesses a unique globally exponentially stable equilibrium.

\begin{theorem}
Consider the closed-loop system $\Sigma_{\mathrm{cl}} = \Sigma_{\mathrm{p}} \circ \Sigma_{\mathrm{I}} \circ \Sigma_{\mathrm{c}}$.
Let

\begin{equation}\label{eq:value_beta}
  \beta \coloneqq \max_i\{\big(\lambda_i \big( M_u^\top M_y + M_y^\top M_u \big)\big)\}.
\end{equation}

Choose $\eta>0$ such that, with $\epsilon$ as defined in~\eqref{eq:ChoiceOfC1},
\begin{equation}\label{eq:choice_eta}
  \eta \kappa_1 \epsilon \min(\eta,1) > \beta.
\end{equation}
Then, the unique equilibrium point of the closed-loop system $\Sigma_{\mathrm{cl}}$ is exponentially stable with decay rate no slower than
\begin{subequations}\label{eq:tau_and_tau_e}
\begin{align}
     \tau  = \frac{1}{2} \min\big( \tau_1, \tau_{2,e} \big),
     \label{eq:tau}
\end{align}
where $0<\tau_1<2\min_i\{-\text{Re}(\lambda_i(A_\mathrm{p}))\}$ and, with $c$ from~\eqref{eq:ChoiceOfC_all}, 
\begin{equation}
         \tau_{2,e}  = \frac{2\tau_2 \, \epsilon c \, \min(\eta,1) - \beta}
                     {2\epsilon c \, \min(\eta,1)}.
\label{eq:tau_e}
\end{equation}
\end{subequations}
\hfill $\square$
\end{theorem}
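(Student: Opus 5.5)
The plan is to use the sum of the two storage functions from Lemmas~\ref{lemma:PlantPassivity} and~\ref{lemma:PdPassivity}, $W(\tilde x,\tilde\theta)=S_{\mathrm p}(\tilde x)+S_{\mathrm c}(\tilde\theta)=\tfrac12\|\tilde x\|_{P_1}^2+\|\tilde\theta\|_{P_2}^2$, as a Lyapunov function for $\Sigma_{\mathrm{cl}}$, and to control the cross term produced by the interconnection~\eqref{eq:Ic}.

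\textbf{Equilibrium.} First I will settle existence and uniqueness of the equilibrium. At any closed-loop equilibrium the $\nu_{\mathrm{eq}}$-dynamics force $R_{\mathrm{eq}}\bar\xi=b$. The first two block rows of $R_{\mathrm{eq}}$ give $\hat{\bar y}=C\hat{\bar x}$ and $A_{\mathrm p}\hat{\bar x}+B\hat{\bar u}+d=0$. Since $A_{\mathrm p}$ is Hurwitz (proof of Lemma~\ref{lemma:PlantPassivity}) and $u=M_u\theta=\hat u$, the plant equilibrium is $\bar x=\hat{\bar x}$, so $\bar y=C\bar x=\hat{\bar y}=M_y\bar\theta$. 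Hence $\bar v_{\mathrm{pd}}=0$. The remaining equilibrium equations are then exactly those of~\eqref{eq:PdController}, which are in one-to-one correspondence with the unique KKT point~\eqref{eq:KktConditions}. Conversely, the KKT point together with this $\bar x$ is an equilibrium. So the equilibrium is unique.

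\textbf{Lyapunov derivative.} Next, since $\bar v_{\mathrm{pd}}=0$, we have $\tilde v_{\mathrm{pd}}=-(\tilde y-M_y\tilde\theta)$ and $\tilde u=M_u\tilde\theta=2B_{\mathrm{pd}}^\top P_2\tilde\theta$. Adding the two dissipation inequalities gives
\[
\dot W\le-\tau_1\|\tilde x\|_{P_1}^2-\tau_2\|\tilde\theta\|_{P_2}^2+\tilde y^\top M_u\tilde\theta-\tilde\theta^\top M_u^\top(\tilde y-M_y\tilde\theta).
\]
The terms in $\tilde y$ cancel, which is the power-preserving property, and what remains is
\[
\dot W\le-\tau_1\|\tilde x\|_{P_1}^2-\tau_2\|\tilde\theta\|_{P_2}^2+\tfrac12\tilde\theta^\top(M_u^\top M_y+M_y^\top M_u)\tilde\theta .
\]
Note that the factor $\tfrac12$ appears because $S_{\mathrm p}$ carries $\tfrac12$ while $S_{\mathrm c}$ does not. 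If instead I keep the normalization exactly as written in the lemmas, the constant in front of $\beta$ may differ by a factor of two; I will track this against~\eqref{eq:tau_e}.

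\textbf{Decay rate.} I bound the indefinite term by $\tfrac{\beta}{2}\|\tilde\theta\|^2$ and use~\eqref{eq:lower_bound_P_2}, $\|\tilde\theta\|^2\le\|\tilde\theta\|_{P_2}^2/(\epsilon c\min(\eta,1))$. This gives
\[
-\tau_2\|\tilde\theta\|_{P_2}^2+\tfrac{\beta}{2}\|\tilde\theta\|^2\le-\frac{2\tau_2\epsilon c\min(\eta,1)-\beta}{2\epsilon c\min(\eta,1)}\|\tilde\theta\|_{P_2}^2=-\tau_{2,e}\|\tilde\theta\|_{P_2}^2 .
\]
Substituting $\tau_2=\eta\kappa_1/(2c)$, the numerator equals $\eta\kappa_1\epsilon\min(\eta,1)-\beta$, which is positive by~\eqref{eq:choice_eta}; hence $\tau_{2,e}>0$. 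Then
\[
\dot W\le-\tau_1\cdot 2S_{\mathrm p}-\tau_{2,e}S_{\mathrm c}\le-\min(\tau_1,\tau_{2,e})\,W .
\]
Since $W$ is sandwiched between positive multiples of $\|\tilde x\|^2+\|\tilde\theta\|^2$ ($P_1\succ0$, and~\eqref{eq:lower_bound_P_2} with the upper bound $\lambda_{\max}(P_2)$), the comparison lemma gives exponential decay of $W$ at rate $\min(\tau_1,\tau_{2,e})$. Taking square roots gives decay of the state norm at rate $\tau=\tfrac12\min(\tau_1,\tau_{2,e})$.

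\textbf{Main obstacle.} The delicate step is the dissipation inequality of Lemma~\ref{lemma:PdPassivity} in the presence of the nonsmooth $\max$ in $g$: $W$ is only differentiable almost everywhere along solutions, so I will argue with the upper Dini derivative or a.e.\ derivatives of absolutely continuous trajectories. Global existence of solutions holds since $f$ is globally Lipschitz. The second point to watch is the bookkeeping of the factors of $2$ in the cross term, so that the constant obtained matches~\eqref{eq:tau_e}.
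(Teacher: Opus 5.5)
Your Lyapunov argument coincides with the paper's. It uses $W=S_{\mathrm p}+S_{\mathrm c}$, exact cancellation of the $\tilde y$-terms through \eqref{eq:Ic}, the bound $\tilde\theta^\top M_u^\top M_y\tilde\theta\le\tfrac{\beta}{2}\|\tilde\theta\|^2$, the lower bound $P_2\succeq\epsilon c\min(\eta,1)\mathbf{I}$, and a comparison step giving $\tau=\tfrac12\min(\tau_1,\tau_{2,e})$. That part is correct, and your substitution $2\tau_2\epsilon c\min(\eta,1)=\eta\kappa_1\epsilon\min(\eta,1)$ is cleaner than the paper's case split.

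What fails is your uniqueness argument for the equilibrium, a point the paper simply takes for granted. You claim that at any closed-loop equilibrium the $\nu_{\mathrm{eq}}$-dynamics force $R_{\mathrm{eq}}\bar\xi=b$. But the controller is $\dot\theta=f(\theta)+B_{\mathrm{pd}}v_{\mathrm{pd}}$ with $B_{\mathrm{pd}}=\tfrac12P_2^{-1}M_u^\top$, and this matrix is not confined to the $\xi$-block. Write $B_{\mathrm{pd}}=\mathrm{col}(B_\xi,B_{\nu_{\mathrm{eq}}},B_{\nu_{\mathrm{ineq}}})$. The second block row of $P_2B_{\mathrm{pd}}=\tfrac12M_u^\top$ gives $B_{\nu_{\mathrm{eq}}}=-\tfrac{\eta}{c}R_{\mathrm{eq}}B_\xi$, which is nonzero whenever $B\neq\mathbf{0}$. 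Indeed, if $R_{\mathrm{eq}}B_\xi$ vanished, the first block row and the block-diagonal structure of $R_{\mathrm{ineq}}$ would force $B_\xi$ to be supported on the $u$-coordinates with an invertible $u$-block. The row $[A_{\mathrm p}\ \mathbf{0}\ B]$ of $R_{\mathrm{eq}}$ would then give $R_{\mathrm{eq}}B_\xi\neq\mathbf{0}$, a contradiction. Hence the $\nu_{\mathrm{eq}}$-equation at an equilibrium reads $\mathbf{0}=\eta(R_{\mathrm{eq}}\bar\xi-b)+B_{\nu_{\mathrm{eq}}}\bar v_{\mathrm{pd}}$. Concluding $R_{\mathrm{eq}}\bar\xi=b$ therefore presupposes $\bar v_{\mathrm{pd}}=\mathbf{0}$, which you then deduce from $R_{\mathrm{eq}}\bar\xi=b$; the argument is circular.

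The repair is short. Existence is fine via your converse construction: the KKT point $\bar\theta$ together with $\bar x=-A_{\mathrm p}^{-1}(B\hat{\bar u}+d)=\hat{\bar x}$ gives $\bar v_{\mathrm{pd}}=\mathbf{0}$ and $f(\bar\theta)=\mathbf{0}$. Uniqueness should be read off from the Lyapunov estimate itself. The inequality of Lemma~\ref{lemma:PdPassivity} holds for every $\theta$, since $Q\succeq\mathbf{0}$ is established uniformly over diagonal $\Gamma$ with entries in $[0,1]$ and over $\mu\mathbf{I}\preceq H\preceq\ell\mathbf{I}$. So $\dot W\le-\min(\tau_1,\tau_{2,e})W$ holds on the whole state space around the KKT-based equilibrium. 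Any other equilibrium is a constant solution, so $0=\dot W\le-\min(\tau_1,\tau_{2,e})W$ forces $W=0$ there, and the two equilibria coincide.

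Two minor corrections. First, the $\tfrac12$ in front of $M_u^\top M_y+M_y^\top M_u$ is just the symmetrization of $\tilde\theta^\top M_u^\top M_y\tilde\theta$. It has nothing to do with the normalization of $S_{\mathrm p}$, because the supply rates $\tilde y^\top\tilde u$ and $\tilde u^\top\tilde v_{\mathrm{pd}}$ cancel with unit weights. So there is no factor-of-two ambiguity, and your constant matches \eqref{eq:tau_e}. Second, the $\max$ in $g$ is not an obstacle. Since $f$ is globally Lipschitz, solutions are $C^1$, and $W$ is quadratic, so $t\mapsto W$ is $C^1$. The nonsmoothness enters only through the pointwise identity $g-\bar g=\Gamma(\phi)(\tilde\nu_{\mathrm{ineq}}+\rho R_{\mathrm{ineq}}\tilde\xi)$ of Proposition~\ref{prop:shifted_primal_dual}, which requires no differentiability.
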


\begin{proof}
    Consider the Lyapunov function \(S \coloneqq S_{\mathrm p}+S_{\mathrm c}\), where $S_\mathrm{p}$ and $S_\mathrm{c}$ are as in Lemmas~\ref{lemma:PlantPassivity} and \ref{lemma:PdPassivity}, respectively. Along any
trajectory of \(\Sigma_{\mathrm{cl}}\), $\dot S$ satisfies, in view of Lemmas~\ref{lemma:PlantPassivity} and \ref{lemma:PdPassivity},  
\begin{equation}
\dot{S}
\;\le\;
- \tau_1 \tilde{x}^\top P_1 \tilde{x}
+ \tilde{y}^\top \tilde{u}
- \tau_2 \tilde{\theta}^\top P_2 \tilde{\theta}
+ 2\,\tilde{\theta}^\top P_2 B_{\mathrm{pd}} \tilde{v}_{\mathrm{pd}} .
\label{eq:LS-start-short}
\end{equation}
Due to  the interconnection~\eqref{eq:Ic}, the inequality \eqref{eq:LS-start-short} simplifies to
\begin{align}
\dot{S}
&\le
- \tau_1 \tilde{x}^\top P_1 \tilde{x}
- \tau_2 \tilde{\theta}^\top P_2 \tilde{\theta}
+ \tilde{\theta}^\top M_u^\top M_y \tilde{\theta}.
\label{eq:cross-cancel-short}
\end{align}
Let \[
\beta = \max_i\{\big(\lambda_i \big( M_u^\top M_y + M_y^\top M_u \big)\big)\}.
\]
Then,
\begin{equation}
\tilde{\theta}^\top M_u^\top M_y \tilde{\theta}
\le \tfrac{\beta}{2}\|\tilde{\theta}\|^2.
\label{eq:betaBound-short}
\end{equation}
Due to \eqref{eq:lower_bound_P_2}, it holds that
\begin{align}
\|\tilde{\theta}\|^2
\le \frac{1}{\epsilon c \min(\eta,1)}\,\tilde{\theta}^\top P_2 \tilde{\theta}.  
\label{eq:thetaBoundbyP}
\end{align}
By combining \eqref{eq:thetaBoundbyP} with \eqref{eq:betaBound-short} we obtain that
\begin{align}
    \tilde{\theta}^\top M_u^\top M_y \tilde{\theta}
\le
\frac{\beta}{2\,\epsilon c \min(\eta,1)}\,
\tilde{\theta}^\top P_2 \tilde{\theta}.
\label{eq:BetaP_Bound}
\end{align}
Considering \eqref{eq:BetaP_Bound}, the following inequality is implied  from \eqref{eq:cross-cancel-short}:
\begin{equation}\label{eq:dot_s_second_to_last}
\dot{S}
\leq
- \tau_1 \tilde{x}^\top P_1 \tilde{x}
- \Big(\tau_2 - \frac{\beta}{2\,\epsilon c \min(\eta,1)}\Big)
\,\tilde{\theta}^\top P_2 \tilde{\theta}.
\end{equation}
From Lemma~\ref{lemma:PdPassivity}, \(\tau_2 = \frac{\eta\kappa_1}{2c}\). Then,
\[
\tau_2 - \frac{\beta}{2\,\epsilon c \min(\eta,1)}
= \frac{\eta\kappa_1}{2c} - \frac{\beta}{2\,\epsilon c \min(\eta,1)}.
\]
Note that if  \(\eta \ge 1\), then \(\min(\eta,1)=1\), and
\[
\frac{\eta\kappa_1}{2c} - \frac{\beta}{2\,\epsilon c}
= \frac{\eta\kappa_1\epsilon - \beta}{2\epsilon c}.
\]
The right-hand side of this equation is positive due to \eqref{eq:choice_eta}.
Alternatively, if  \(0<\eta<1\),  then \(\min(\eta,1)=\eta\), and
\[
\frac{\eta\kappa_1}{2c} - \frac{\beta}{2\,\epsilon c \eta}
= \frac{\eta^2\kappa_1\epsilon - \beta}{2\epsilon c}.
\]
Analogously, the right-hand side of this equation is positive due to \eqref{eq:choice_eta}. It follows that under the condition of the theorem \(\tau_2 - \frac{\beta}{2\,\epsilon c \min(\eta,1)} > 0\). Then,  choosing $\tau_{2,e}>0$ as in \eqref{eq:tau_e}  ensures  that
\[
\dot{S}
\le
- \tau_1 \tilde{x}^\top P_1 \tilde{x}
- \tau_{2,e} \tilde{\theta}^\top P_2 \tilde{\theta}.
\]
Consequently, the unique equilibrium of the closed-loop system $\Sigma_\mathrm{cl}$  is exponentially stable. Moreover, the decay rate is no slower than
\(\tau = \frac{1}{2} \min(\tau_1,\tau_{2,e})\) \cite[Theorem~4.10]{khalil2002nonlinear}.
\end{proof}

\begin{remark}
Assume that at each fault-induced switching instant the Lyapunov function $S$ may increase by at most a factor \(\gamma_{\mathrm f}\ge 1\), i.e., $S(t_k^{+}) \le \gamma_{\mathrm f}\, S(t_k^{-})$. Between switches, each mode (i.e., the closed-loop dynamics obtained by interconnecting \eqref{eq:CompactModel}, \eqref{eq:PdControllerWithInput}, and \eqref{eq:Ic} for a fixed fault configuration) has an exponentially stable equilibrium with decay rate
\(\tau>0\), i.e., \(\dot S \le -2\tau S\). Then, a sufficient dwell-time ensuring
overall exponential stability, even under consecutive faults, is given by $T_{\mathrm{dwell}} > \frac{\ln \gamma_{\mathrm f}}{2\tau}$, which follows from standard average dwell-time arguments for switched systems
with multiple Lyapunov functions \cite[Theorem~3.2]{liberzon2003}.
\hfill $\square$
\end{remark}

\section{Application to DC Microgrids}\label{sec:application_microgrids}

\begin{figure*}[t]
  \centering
  \begin{subfigure}[t]{0.49\textwidth}
    \centering
    \includegraphics[width=\linewidth]{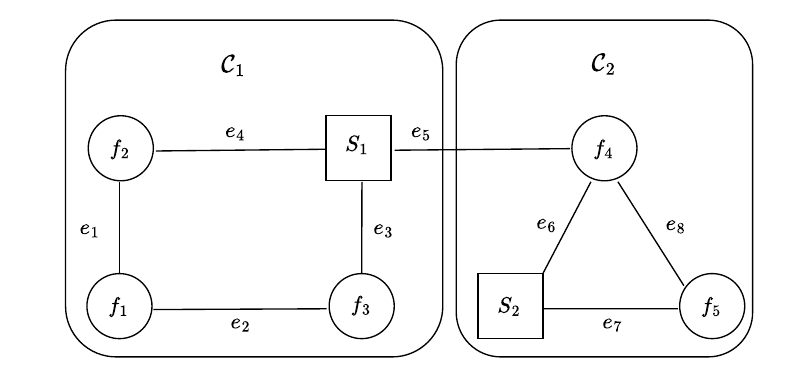}
    \caption{DC microgrid partitioned into two clusters $\mathcal{C}_1$ and $\mathcal{C}_2$. }
    \label{fig:Cluster}
  \end{subfigure}\hfill
  \begin{subfigure}[t]{0.49\textwidth}
    \centering
    \input{figs/fig_buses_tikz} 
    \caption{Schematic of voltage-following (left) and voltage-setting (right) buses.}
    \label{fig:Buses}
  \end{subfigure}
  \caption{System overview: (a) DC microgrid partitioned into clusters and (b) bus-type schematic.}
  \label{fig:ClusterAndBuses}
\end{figure*}

Driven by environmental goals, the growing penetration of renewables is accelerating the adoption of flexible distribution architectures such as microgrids. Microgrids comprise interconnected \gls{dgu}s, storage units, and loads, and can operate grid-connected or islanded~\cite{Cucuzzella}; here we focus on the islanded case. Microgrids remain vulnerable to faults that can degrade performance or compromise stability, motivating robust and \gls{ftc} strategies~\cite{Huang}. The  \gls{ftc} framework developed in this paper for general \glspl{ncs} is well suited to address these challenges, as illustrated next on a representative DC microgrid example.

\subsection{System Model}

Consider a network of \(n\ge1\) DC microgrid clusters
\(\mathcal{C}_1,\dots,\mathcal{C}_n\) interconnected via tie-lines; see Fig.~\ref{fig:Cluster} for an illustrative example with two clusters. We model each cluster $\mathcal{C}_i$ as an arbitrarily oriented graph $\mathcal{G}_i=(\mathcal{N}_i,\mathcal{E}_i)$, where each element in $\mathcal{N}_i$ represents an electric bus, and each element in $\mathcal{E}_i$, an inductive-resistive power line. We consider the following splitting of the elements in $\mathcal{N}_i$ and $\mathcal{E}_i$:
\begin{align*}
    \mathcal{N}_i & =\mathcal{N}_{\mathrm{s},i}^\mathrm{a}\cup \mathcal{N}_{\mathrm{f},i}^\mathrm{a}\cup \mathcal{N}_{\mathrm{s},i}^\mathrm{b}\cup \mathcal{N}_{\mathrm{f},i}^\mathrm{b}\\
 \mathcal{E}_i & =\mathcal{E}^\mathrm{a}_i\cup  \mathcal{E}_i^\mathrm{ab}.
\end{align*}
Each element  $j\in \mathcal{N}_{\mathrm{s},i}^\mathrm{a}$ represents an electric bus \emph{within} the cluster $\mathcal{C}_i$,  whose voltage $v_{\mathrm{s},j}$ is directly assigned by a voltage-setting DGU. We assume that each cluster has at least one voltage-setting bus, i.e., $\vert \mathcal{N}_{\mathrm{s},i}^\mathrm{a}\vert \geq 1$.
Each element $j\in \mathcal{N}_{\mathrm{f},i}^\mathrm{a}$ represents a capacitive bus, also within the cluster $\mathcal{C}_i$, with voltage $v_{\mathrm{f},j}$ and capacitance $c_{\mathrm{f},j}$, to which a voltage-following DGU directly injects a current $i_{\mathrm{f},j}$, and which energizes a ZI-load with admittance $\psi_{\mathrm{load},j}$ and constant-current term $\delta_{\mathrm{load},j}$. Analogous descriptions follow for the elements in $\mathcal{N}_{\mathrm{s},i}^\mathrm{b}$ and $\mathcal{N}_{\mathrm{f},i}^\mathrm{b}$, with the distinction that their elements are associated to  buses of any other cluster different from $\mathcal{C}_i$ which is physically linked to $\mathcal{C}_i$ via a tie-line.

Each tie-line is assigned to exactly one of its incident clusters. Tie-lines never connect two voltage-setting buses; if a tie-line connects a voltage-setting bus to a voltage-following bus, it is assigned to the cluster containing the voltage-setting bus. Accordingly, $\mathcal{E}_i^\mathrm{a}$ collects all lines owned by $\mathcal{C}_i$ (internal lines and assigned tie-lines), whereas $\mathcal{E}_i^\mathrm{ab}$ contains tie-lines incident to $\mathcal{C}_i$ but assigned to adjacent clusters. Accordingly, $ \mathcal{E}_i^\mathrm{a} \subset \bigl(\mathcal{N}_i^\mathrm{a} \times \mathcal{N}_i^\mathrm{a}\bigr) \cup \bigl(\mathcal{N}_i^\mathrm{a} \times \mathcal{N}_i^\mathrm{b}\bigr).$ Alternatively, each element in \(\mathcal{E}_i^\mathrm{ab}\) represents a tie-line linking a bus within cluster \(\mathcal{C}_i\) to a bus \(k \in \mathcal{N}_i^\mathrm{b}\) belonging to an adjacent cluster, \emph{and assigned to that adjacent cluster}, i.e., $\mathcal{E}_i^\mathrm{ab} \subset \mathcal{N}_i^\mathrm{a} \times \mathcal{N}_i^\mathrm{b}.$ The current and inductance--resistance pair of such lines adopts the same notation as for the elements in \(\mathcal{E}_i^\mathrm{a}\).  

For each cluster $\mathcal{C}_i$, let us introduce the vectors $V_{\mathrm{s,a},i}=\text{col}(\{v_{\mathrm{s},j}\}_{j\in \mathcal{N}_{\mathrm{s},i}^\mathrm{a}})$ and $V_{\mathrm{f,a},i}=\text{col}(\{v_{\mathrm{f},j}\}_{j\in \mathcal{N}_{\mathrm{f},i}^\mathrm{a}})$, with analogous definitions holding for $V_{\mathrm{s,b},i}$ and $V_{\mathrm{f,b},i}$. Moreover, let $I_{\mathrm{a},i}=\text{col}(\{i_{\mathrm{line},j}\}_{j\in \mathcal{E}_i^\mathrm{a}})$ and $I_{\mathrm{ab},i}=\text{col}(\{i_{\mathrm{line},j}\}_{j\in \mathcal{E}_i^\mathrm{ab}})$. By introducing the node-edge incidence matrix $\mathcal{B}_i$ for cluster $\mathcal{C}_i$ as 
\begin{equation*}
\mathcal{B}_i^\top
=
\begin{bmatrix}
\left(\mathcal{B}_{\mathrm{s,a},i}^\mathrm{a}\right)^\top &
\left(\mathcal{B}_{\mathrm{f,a},i}^\mathrm{a}\right)^\top &
\left(\mathcal{B}_{\mathrm{s,b},i}^\mathrm{a}\right)^\top &
\left(\mathcal{B}_{\mathrm{f,b},i}^\mathrm{a}\right)^\top\\
\left(\mathcal{B}_{\mathrm{s,a},i}^\mathrm{ab}\right)^\top &
\left(\mathcal{B}_{\mathrm{f,a},i}^\mathrm{ab}\right)^\top &
\left(\mathcal{B}_{\mathrm{s,b},i}^\mathrm{ab}\right)^\top &
\left(\mathcal{B}_{\mathrm{f,b},i}^\mathrm{ab}\right)^\top
\end{bmatrix},
\end{equation*}
then it is possible to obtain the following dynamical model for $\mathcal{C}_i$:
\begin{center}
\scalebox{1}{
\begin{minipage}{\linewidth}
\begin{subequations}\label{eq:MgClusterModel-short}
\begin{align}
C_{\mathrm{f,a},i}\dot V_{\mathrm{f,a},i} &=
- Y_{\mathrm{load},i}V_{\mathrm{f,a},i} - I_{\mathrm{load},i}+I_{\mathrm{f,a},i} - \mathcal{B}_{\mathrm{f,a},i}^\mathrm{a} I_{\mathrm{a},i}- \mathcal{B}^{\mathrm{ab}}_{\mathrm{f,a},i} I_{\mathrm{ab},i}
,\\
L_{\mathrm{a},i}\dot I_{\mathrm{a},i} &=
- R_{\mathrm{a},i} I_{\mathrm{a},i}
+ (\mathcal{B}_{\mathrm{s,a},i}^\mathrm{a})^\top V_{\mathrm{s,a},i}
+ (\mathcal{B}_{\mathrm{f,a},i}^\mathrm{a})^\top V_{\mathrm{f,a},i} 
+ (\mathcal{B}_{\mathrm{f,b},i}^\mathrm{a})^\top V_{\mathrm{f,b},i},
\end{align}
\end{subequations}
\end{minipage}
}
\end{center}
where $C_{\mathrm{f,a},i}=\text{diag}(\{c_{\mathrm{f},j}\}_{j\in \mathcal{N}_{\mathrm{f},i}^\mathrm{a}})$, $Y_{\mathrm{load},i}=\text{diag}(\{\psi_{\mathrm{load} ,j}\}_{j\in \mathcal{N}_{\mathrm{f},i}^\mathrm{a}})$, $I_{\mathrm{load},i}=\text{col}(\{\delta_{\mathrm{load},j}\}_{j\in \mathcal{N}_{\mathrm{f},i}^\mathrm{a}})$ and $I_{\mathrm{f,a},i}=\text{col}(\{i_{\mathrm{f},j}\}_{j\in \mathcal{N}_{\mathrm{f},i}^\mathrm{a}})$; moreover, $L_{\mathrm{a},i}=\text{diag}(\{\ell_{\mathrm{line},j}\}_{j\in \mathcal{E}_i^\mathrm{a}})$ and $R_{\mathrm{a},i}=\text{diag}(\{r_{\mathrm{line},j}\}_{j\in \mathcal{E}_i^\mathrm{a}})$. 
The model \eqref{eq:MgClusterModel-short} can be written in the  state space form~\eqref{eq:plant_i} by defining the state vector $x_i =\mathrm{col}(V_{\mathrm{f,a},i},I_{\mathrm{a},i})$, the measured output $y_i = \mathrm{col}(V_{\mathrm{f,a},i},I_{\mathrm{a},i})$, the interconnection output $z_i = \mathrm{col}(\mathcal{B}_{\mathrm{f,a},i}^\mathrm{ab} V_{\mathrm{f,a},i}, \mathcal{B}_{\mathrm{f,b},i}^{\mathrm{a}} I_{\mathrm{a},i})$, the vector of control inputs $u_i = \mathrm{col}(I_{\mathrm{f,a},i},V_{\mathrm{s,a},i})$, the vector of external variables $ w_i = \mathrm{col}(I_{\mathrm{ab},i}, V_{\mathrm{f,b},i})$, with $(A_i,B_i,d_i,G_i,E_i,C_i)$ defined accordingly. Moreover, the admissible steady-state sets for each cluster $\mathcal{C}_i$ are
$\mathcal{X}_i \coloneqq \{\,x_i \mid V_{\mathrm{f,a},i} \ge V_{\mathrm{f,a},i}^{\min}\,\}$,
$\mathcal{U}_i \coloneqq \{\,u_i \mid I_{\mathrm{f,a},i} \le I_{\mathrm{f,a},i}^{\max},~ V_{\mathrm{s,a},i} \ge V_{\mathrm{s,a},i}^{\min}\,\}$, and
$\mathcal{Y}_i \coloneqq \{\,y_i \mid y_i=x_i,~ V_{\mathrm{f,a},i} \ge V_{\mathrm{f,a},i}^{\min}\,\}$ \footnote{From a practical standpoint, we impose lower bounds on the setting-bus voltages to avoid undervoltage and upper bounds on the follower-bus injected currents to respect source current ratings, which may become tighter after a fault.}.

The interconnection between clusters $\mathcal{C}_i$ and $\mathcal{C}_j$ is
\[
\Omega_{ij}\coloneqq
\begin{bmatrix}
\mathbf{0} & -P^{\mathrm i}_{ij}\\
P^{\mathrm v}_{ij} & \mathbf{0}
\end{bmatrix},
\]
where $P^{\mathrm v}_{ij}$ and $P^{\mathrm i}_{ij}$ are Boolean selection matrices encoding the tie-line adjacency.
\footnote{%
$P^{\mathrm v}_{ij}$ selects the boundary-bus voltages of $\mathcal{C}_j$ that appear as remote voltages in $V_{\mathrm{f,b},i}$,
and $P^{\mathrm i}_{ij}$ selects the tie-line currents owned by $\mathcal{C}_j$ that appear as incident (non-owned) currents in $I_{\mathrm{ab},i}$.
The minus sign enforces opposite current-flow conventions at the two ends of each tie-line.}
By consistent stacking of shared tie-lines, the selection matrices satisfy
$P^{\mathrm i}_{ij}=(P^{\mathrm v}_{ji})^\top$ (equivalently $P^{\mathrm v}_{ij}=(P^{\mathrm i}_{ji})^\top$), hence
$\Omega_{ij}+\Omega_{ji}^\top=\mathbf{0}$ and Assumption~\ref{assump:interconnection_passive} holds.

\subsection{Fault Scenario and Numerical Simulations}

\begin{figure*}[t]
    \centering

    % --- Left: merged voltages ---
    \begin{subfigure}[t]{0.48\textwidth}
        \centering
        \includegraphics[width=\linewidth]{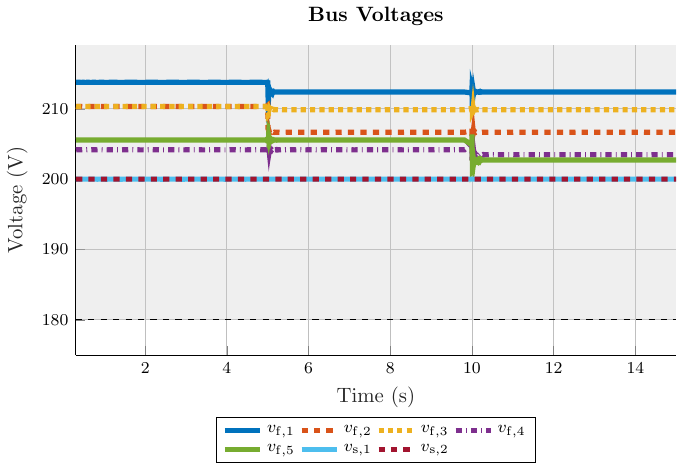}
        \caption{Follower- and setting-bus voltages $v_{\mathrm{f},j}$ and $v_{\mathrm{s},k}$.}
        \label{fig:sub_vfvs}
    \end{subfigure}
    \hfill
    % --- Right: injected currents ---
    \begin{subfigure}[t]{0.48\textwidth}
        \centering
        \includegraphics[width=\linewidth]{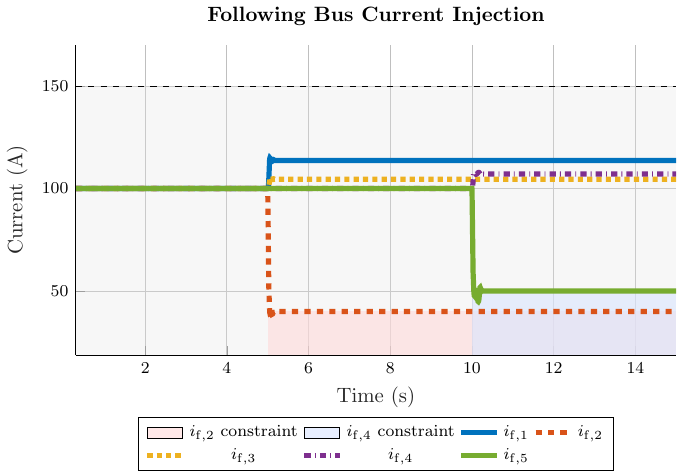}
        \caption{Injected currents $i_{\mathrm{f},j}$ under changing limits.}
        \label{fig:sub_if}
    \end{subfigure}

    \caption{Closed-loop trajectories of the DC microgrid under the proposed FTC scheme.}
    \label{fig:ftc_dcmg_timeseries}
\end{figure*}

We focus on faults that manifest as a power loss at a voltage-following bus. Then, for some voltage-following DGU $j\in \mathcal{N}^{\mathrm{a}}_{\mathrm{f},i}$ within a cluster $\mathcal{C}_i$, a step reduction in the current injection $i_{\mathrm{f},j}$ occurs at time $t_\mathrm{f}$.  Consequently, the objective of the FTC \eqref{eq:PdControllerWithInput},~\eqref{eq:Ic}, applied to the DC microgrid model,  is to drive the system towards a new equilibrium whose error with respect to the pre-fault  pre-fault equilibrium is minimum, while respecting all physical and fault-induced constraints. The desired post-fault steady-state is a solution to an optimization problem of the form \eqref{eq:FTCObjectives3}, with cost function
\begin{equation*}
    J(\bar \xi)=\Vert \bar \xi-\xi^*\Vert_K^2,
\end{equation*}
with $\xi^*=\mathrm{col}(x^*,u^*)$, 
$\bar{\xi}=\mathrm{col}(\bar x, \bar u)$,  
$
K=\mathrm{blkdiag}(K_x,K_u)\succ \mathbf{0}$,
and steady-state equality and inequality constraints given by $R_\mathrm{eq}\bar \xi = b$ and $R_\mathrm{ineq}\bar \xi \leq h$, with

\[
R_{\mathrm{eq}}=
\begin{bmatrix}
- Y_{\mathrm{load}} & - \mathcal B_{\mathrm{f}} & \mathbf{I} & \mathbf{0}\\
(B_{\mathrm{f}})^\top & - R_\mathrm{a} & \mathbf{0} & \mathcal (B_{\mathrm{s}})^\top
\end{bmatrix},
\quad
b=\begin{bmatrix} I_{\mathrm{load}}  \\ \mathbf{0} \end{bmatrix},
\]

\[
R_{\mathrm{ineq}}=
\begin{bmatrix} \mathbf{0} & \mathbf{0} & \mathbf{I} & \mathbf{0} \\
\mathbf{0} & \mathbf{0} & \mathbf{0} & -\mathbf{I}  \end{bmatrix},
\quad
h=\begin{bmatrix}  I^{\max}_{\mathrm{f}}  \\ - V^{\min}_{\mathrm{s}}   \end{bmatrix},
\]
where $Y_{\mathrm{load}}\coloneqq \mathrm{blkdiag}(Y_{\mathrm{load},1},\dots,Y_{\mathrm{load},n}),\ \ R_{\mathrm a}\coloneqq \mathrm{blkdiag}(R_{\mathrm a,1},\dots,R_{\mathrm a,n}),\ \ I_{\mathrm{load}}\coloneqq \mathrm{col}(I_{\mathrm{load},1},\dots,I_{\mathrm{load},n}).$ Moreover, $\mathcal{B}_{\mathrm{f}}$ and $\mathcal{B}_{\mathrm{s}}$ denote the global node--edge incidence matrices corresponding to the voltage-following and voltage-setting buses, respectively.

For the numerical experiments, we consider the DC microgrid shown in Fig.~\ref{fig:Cluster}.  All parameters defining the system model and the steady-state operational constraints are provided in the accompanying online repository~\cite{zaidi2025dcmg}\footnote{Using these values, it can be verified that the assumptions and design conditions stated in this paper are satisfied for the considered system.}. Figure~\ref{fig:ftc_dcmg_timeseries} shows the trajectories of  \(i_{\mathrm f}\), and the bus voltages \(v_{\mathrm f},v_{\mathrm s}\).
At \(t=5\,\mathrm{s}\), the voltage-following bus~2 loses power capacity, reducing its injection limit from \(150\,\mathrm{A}\) to \(40\,\mathrm{A}\); at \(t=10\,\mathrm{s}\), the voltage-following bus~5 is reduced to \(50\,\mathrm{A}\).
Despite these faults, the proposed FTC keeps voltages and currents within bounds and converges to a feasible operating point close to the pre-fault one.

\section{Conclusions} \label{sec:Conclusion}
In this paper, we have presented an active primal-dual FTC scheme based on an \gls{a-pdgd} for the post-fault exponential stabilization of a class of networked CPS. The proposed approach---which is framed as a CbI control scheme---was designed to provably achieve convergence towards the unique (KKT) solution of a constrained optimization problem encoding desired post-fault operation. This addresses the lack, highlighted in the literature review, of network-level, constraint-aware optimal coordination with explicit optimality guarantees under shared-resource and interconnection constraints in \gls{ncs} such as power and energy networks (e.g., microgrids). Numerical simulations on a representative DC microgrid showed that the proposed method ensures post-fault closed-loop stability and maintains pre-specified steady-state operational constraints (such as bus-voltage and line–current limits) in the presence of non-trivial faults, such as loss of power capacity at the generation units. Future work will focus on providing conditions for enabling a distributed implementation of the proposed controller and on its real-time experimental validation in a power hardware-in-the-loop environment.

\section*{Acknowledgment}
\noindent This work builds on earlier unpublished work conducted while the second author was at University of Groningen (2020-2023). We acknowledge S. Ahmed, J. Ferguson, M. Cucuzzella and J.M.A. Scherpen for their contributions to those efforts.

\bibliographystyle{IEEEtran}
\bibliography{references}

\appendix
\subsection{Proof of Lemma \ref{lemma:PdPassivity}}  \label{ProofLemma:PdPassivity}

To prove Lemma~2, we follow a proof strategy similar to the inequality-only case in \cite{qu_li_augPD}. To improve readability, key facts relevant for proving  Lemma~2 are presented in the form of preliminary propositions, after which we give  a re-statement of Lemma~2  and  present its proof.

\begin{proposition}\label{prop:lower_bound_P2}
Let $P_2$ be defined in~\eqref{eq:matrix_P_2}. Let $\epsilon\in(0,1)$ and let
$R$ be the stacked constraint matrix defined in~\eqref{eq:R_stacked}. Suppose that
$RR^\top \preceq \kappa_2 \mathbf I$ holds (Assumption~\ref{assump:BoundOnR}).
If $c$ is such that \eqref{eq:ChoiceOfC1} holds, i.e., 
\begin{equation*}
    c \ge
\begin{cases}
\displaystyle \eta\,\sqrt{\frac{\kappa_{2}}{(\eta - \epsilon)(1-\epsilon)}}, 
& \eta \ge 1,\\[3mm]
\displaystyle \sqrt{\frac{\eta\,\kappa_{2}}{(1 - \epsilon)(1-\eta \epsilon)}}, 
& 0 < \eta < 1,
\end{cases}
\end{equation*}
then
\begin{equation}\label{eq:lower_bound_P_2}
P_2 \;\succeq\; \epsilon\,c\,\min(\eta,1)\,\mathbf I,
\end{equation}
and, in particular, $P_2$ is positive definite.
\hfill $\square$
\end{proposition}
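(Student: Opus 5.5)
The plan is to reduce the claim to a Schur complement test on a $2\times 2$ block matrix. The key observation is that, after grouping the two dual blocks together, $P_2$ in \eqref{eq:matrix_P_2} has the structure
\[
P_2=\begin{bmatrix} \eta c\,\mathbf I & \eta R^\top\\ \eta R & c\,\mathbf I\end{bmatrix},
\]
where $R$ is the stacked matrix in \eqref{eq:R_stacked}. This holds because the $(\nu_{\mathrm{eq}},\nu_{\mathrm{ineq}})$ diagonal block is $c\,\mathbf I$ and the off-diagonal blocks are exactly $\eta R_{\mathrm{eq}}$ and $\eta R_{\mathrm{ineq}}$.

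Set $m\coloneqq\min(\eta,1)$ and consider
\[
Q\coloneqq P_2-\epsilon c m\,\mathbf I=\begin{bmatrix} a\,\mathbf I & \eta R^\top\\ \eta R & d\,\mathbf I\end{bmatrix},\qquad a\coloneqq c(\eta-\epsilon m),\quad d\coloneqq c(1-\epsilon m).
\]

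First I check that both diagonal scalars are positive. Since $\epsilon\in(0,1)$ and $m\le\min(\eta,1)$, we get $\eta-\epsilon m\ge \eta(1-\epsilon)>0$ and $1-\epsilon m\ge 1-\epsilon>0$. Together with $c>0$, this gives $a>0$ and $d>0$.

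Next I apply the Schur complement with respect to the lower-right block $d\,\mathbf I\succ0$. This gives $Q\succeq 0$ if and only if
\[
a\,\mathbf I-\tfrac{\eta^2}{d}R^\top R\succeq 0 .
\]
The nonzero eigenvalues of $R^\top R$ coincide with those of $RR^\top$, so $\lambda_{\max}(R^\top R)\le\kappa_2$ by \eqref{eq:BoundOnR}. Hence it suffices that $ad\ge\eta^2\kappa_2$, that is,
\[
c^2(\eta-\epsilon m)(1-\epsilon m)\ge \eta^2\kappa_2 .
\]

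Finally I split into cases to match \eqref{eq:ChoiceOfC1}.
\begin{itemize}
\item For $\eta\ge1$ we have $m=1$, and the condition becomes $c^2(\eta-\epsilon)(1-\epsilon)\ge\eta^2\kappa_2$. This is exactly $c\ge\eta\sqrt{\kappa_2/((\eta-\epsilon)(1-\epsilon))}$.
\item For $0<\eta<1$ we have $m=\eta$, so $(\eta-\epsilon\eta)(1-\epsilon\eta)=\eta(1-\epsilon)(1-\eta\epsilon)$. The condition then reduces to $c^2\ge \eta\kappa_2/((1-\epsilon)(1-\eta\epsilon))$, which is the second branch.
\end{itemize}
Since $c=\max(c_1,c_2,c_3)\ge c_1$, the bound \eqref{eq:lower_bound_P_2} follows in both cases. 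Positive definiteness of $P_2$ is then immediate, because $\epsilon c m>0$.

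I see no real obstacle here. The only points needing care are recognizing the $2\times2$ block structure in terms of the stacked $R$, and passing from the bound on $RR^\top$ to the same bound on $R^\top R$.
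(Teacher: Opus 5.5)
Your proof is correct and follows essentially the same route as the paper: you rewrite $P_2$ as a $2\times 2$ block matrix in the stacked $R$, subtract $\epsilon c\min(\eta,1)\mathbf I$, apply a Schur complement with $RR^\top\preceq\kappa_2\mathbf I$, and split on $\eta\ge 1$ versus $0<\eta<1$. The differences are cosmetic: you pivot on the lower-right block, so you need $\lambda_{\max}(R^\top R)=\lambda_{\max}(RR^\top)$, whereas the paper pivots on the upper-left block and gets $RR^\top$ directly; you also derive the unified condition $c^2(\eta-\epsilon\min(\eta,1))(1-\epsilon\min(\eta,1))\ge\eta^2\kappa_2$ before splitting into cases.
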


\begin{proof}
Consider
\[
\widetilde P_2 \;\coloneqq\; P_2 - \epsilon c \min(\eta,1)\,\mathbf I.
\]
Using the stacked matrix $R$, the matrix $P_2$ in~\eqref{eq:matrix_P_2} can be written as
\[
P_2 \;=\;
\begin{bmatrix}
\eta c\,\mathbf I & \eta R^\top\\
\eta R & c\,\mathbf I
\end{bmatrix},
\]
and hence
\[
\widetilde P_2 \;=\;
\begin{bmatrix}
c\bigl(\eta-\epsilon\min(\eta,1)\bigr)\mathbf I & \eta R^\top\\
\eta R & c\bigl(1-\epsilon\min(\eta,1)\bigr)\mathbf I
\end{bmatrix}.
\]
We show $\widetilde P_2\succeq 0$ by a Schur-complement argument.

\medskip
\noindent\emph{Case 1: $\eta\ge 1$.}
Then $\min(\eta,1)=1$ and
\[
\widetilde P_2 \;=\;
\begin{bmatrix}
c(\eta-\epsilon)\mathbf I & \eta R^\top\\
\eta R & c(1-\epsilon)\mathbf I
\end{bmatrix}.
\]
Since $\eta>\epsilon$, the leading block $c(\eta-\epsilon)\mathbf I \succ 0$.
By the Schur complement, $\widetilde P_2\succeq 0$ is implied by
\[
c(1-\epsilon)\mathbf I \;-\; \frac{\eta^2}{c(\eta-\epsilon)}\,RR^\top \;\succeq\; 0.
\]
Using $RR^\top \preceq \kappa_2 \mathbf I$, it is sufficient that
\begin{align}\label{eq:ineqCase1_P2}
    c(1-\epsilon) \;-\; \frac{\eta^2}{c(\eta-\epsilon)}\,\kappa_2 \;& \ge\; 0 \nonumber\\
 \quad\Longleftrightarrow\quad \nonumber\\
c^2(1-\epsilon)(\eta-\epsilon) \; &\ge\; \eta^2\kappa_2.
\end{align}
\medskip
\noindent\emph{Case 2: $0<\eta<1$.}
Then $\min(\eta,1)=\eta$ and
\[
\widetilde P_2 \;=\;
\begin{bmatrix}
\eta c(1-\epsilon)\mathbf I & \eta R^\top\\
\eta R & c(1-\eta\epsilon)\mathbf I
\end{bmatrix}.
\]
Since $\epsilon\in(0,1)$, the leading block $\eta c(1-\epsilon)\mathbf I \succ 0$.
By the Schur complement, $\widetilde P_2\succeq 0$ is implied by
\[
c(1-\eta\epsilon)\mathbf I \;-\; \frac{\eta}{c(1-\epsilon)}\,RR^\top \;\succeq\; 0.
\]
Using $RR^\top \preceq \kappa_2 \mathbf I$, it is sufficient that
\begin{align}\label{eq:ineqCase2_P2}
    c(1-\eta\epsilon) \;-\; \frac{\eta}{c(1-\epsilon)}\,\kappa_2 \; & \ge \; 0 \nonumber\\
 & \quad\Longleftrightarrow\quad \nonumber\\
 c^2(1-\epsilon)(1-\eta\epsilon) & \ge\; \eta\kappa_2.
\end{align}
\medskip
Then,  if  $c$ is as in the Proposition's statement, we can see directly that
\eqref{eq:ineqCase1_P2} holds in Case~1 and \eqref{eq:ineqCase2_P2} holds in Case~2, and hence $\widetilde P_2\succeq 0$. Therefore, $P_2 \succeq \epsilon c \min(\eta,1)\mathbf I$, which proves~\eqref{eq:lower_bound_P_2}.
Since $\epsilon c \min(\eta,1)>0$, this further implies $P_2\succ 0$.
\end{proof}

\begin{proposition}\label{prop:shifted_primal_dual}
    Let
    \begin{equation}\label{eq:phi_i}
    \phi_i:= \nu_{\mathrm{ineq},i} + \rho\,(R_{\mathrm{ineq}}\xi - h)_i.
\end{equation}
    Then, there exists a symmetric matrix $H(\xi)\in \mathbb{R}^{n_\xi\times n_\xi}$ and functions $\phi_i\mapsto \gamma_i(\phi_i)$ 
    satisfying, for $i=1,2,...,n_\mathrm{ineq}$,
\begin{equation}\label{eq:properties_gamma_i}
    \gamma_i(\phi_i)\in [0,1],   
    \end{equation}
 such that the shifted dynamics (error dynamics) of the Aug-PDGD \eqref{eq:PdControllerWithInput} can be written as
\begin{equation}\label{eq:shiftedPd-short}
\dot{\tilde{\theta}}
= F(\theta)\tilde{\theta} + B_{\mathrm{pd}}\tilde v_{\mathrm{pd}},
\end{equation}
with  
\begin{subequations}
\begin{equation}\label{eq:F_theta}
F(\theta)=
\begin{bmatrix}
- H(\xi) - \rho R_{\mathrm{ineq}}^\top \Gamma(\phi) R_{\mathrm{ineq}}
&
- R_{\mathrm{eq}}^\top
&
- R_{\mathrm{ineq}}^\top \Gamma (\phi)
\\[2mm]
\eta R_{\mathrm{eq}} & \mathbf{0} & \mathbf{0}
\\[2mm]
\eta \Gamma(\phi) R_{\mathrm{ineq}} & \mathbf{0} &
\frac{\eta}{\rho}(\Gamma(\phi) - \mathbf{I})
\end{bmatrix}
\end{equation}
and
\begin{equation}\label{eq:diag_mat_GAMMA}
\Gamma(\phi):=\operatorname{diag}(\gamma_1(\phi_1),\ldots,\gamma_{n_{\mathrm{ineq}}}(\phi_{n_{\mathrm{ineq}}})).
\end{equation}
\end{subequations}
Moreover, the matrix \(H(\xi)\) satisfies  
\begin{equation}\label{eq:bounds_H_xi}
    \mu\,\mathbf{I} \le H(\xi) \le \ell\,\mathbf{I},
\end{equation}
where $0<\mu\leq \ell$ are as in Assumption~\ref{assump:uConvex}.
\hfill $\square$
\end{proposition}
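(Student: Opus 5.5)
The plan is to write $\tilde\theta=\theta-\bar\theta$, with $\bar\theta=\mathrm{col}(\bar\xi,\bar\nu_{\mathrm{eq}},\bar\nu_{\mathrm{ineq}})$ the equilibrium of \eqref{eq:PdControllerWithInput} for the constant input $\bar v_{\mathrm{pd}}$. We then subtract the equilibrium equation $0=f(\bar\theta)+B_{\mathrm{pd}}\bar v_{\mathrm{pd}}$ from the dynamics. This gives $\dot{\tilde\theta}=f(\theta)-f(\bar\theta)+B_{\mathrm{pd}}\tilde v_{\mathrm{pd}}$, so it suffices to write $f(\theta)-f(\bar\theta)=F(\theta)\tilde\theta$ row block by row block.

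\emph{Gradient term.} By the mean value theorem in integral form, $\nabla J(\xi)-\nabla J(\bar\xi)=H(\xi)\tilde\xi$ with
\[
H(\xi)\coloneqq\int_0^1\nabla^2 J\big(\bar\xi+s\tilde\xi\big)\,ds .
\]
This matrix is symmetric because $J\in C^2$. By Assumption~\ref{assump:uConvex}, each integrand satisfies $\mu\mathbf I\preceq\nabla^2J\preceq\ell\mathbf I$, so integrating gives \eqref{eq:bounds_H_xi}. Strictly $H$ depends on $(\xi,\bar\xi)$, but $\bar\xi$ is fixed, which is why the statement writes $H(\xi)$.

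\emph{Equality-constraint terms.} These are affine: $-R_{\mathrm{eq}}^\top\nu_{\mathrm{eq}}$ contributes $-R_{\mathrm{eq}}^\top\tilde\nu_{\mathrm{eq}}$. In the second block row, $\eta(R_{\mathrm{eq}}\xi-b)-\eta(R_{\mathrm{eq}}\bar\xi-b)=\eta R_{\mathrm{eq}}\tilde\xi$. This yields the middle block row of \eqref{eq:F_theta} exactly.

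\emph{Max nonlinearity (main step).} For each component, let $\bar\phi_i$ be \eqref{eq:phi_i} evaluated at $\bar\theta$, so $\tilde\phi_i=\phi_i-\bar\phi_i=\tilde\nu_{\mathrm{ineq},i}+\rho(R_{\mathrm{ineq}}\tilde\xi)_i$. The scalar map $s\mapsto\max(s,0)$ is monotone and $1$-Lipschitz. Define
\[
\gamma_i\coloneqq\frac{\max(\phi_i,0)-\max(\bar\phi_i,0)}{\phi_i-\bar\phi_i}\quad\text{if }\phi_i\neq\bar\phi_i,
\]
and $\gamma_i\coloneqq0$ (or any value in $[0,1]$) otherwise. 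Then $\gamma_i\in[0,1]$, since it is a difference quotient of a nondecreasing $1$-Lipschitz function, and $g_i(\theta)-g_i(\bar\theta)=\gamma_i\tilde\phi_i$ holds in all cases. Setting $\Gamma=\mathrm{diag}(\gamma_i)$ as in \eqref{eq:diag_mat_GAMMA}, we get
\[
g(\theta)-g(\bar\theta)=\Gamma\tilde\nu_{\mathrm{ineq}}+\rho\,\Gamma R_{\mathrm{ineq}}\tilde\xi .
\]
Substituting into the first block row, $-R_{\mathrm{ineq}}^\top(g-\bar g)$ gives $-\rho R_{\mathrm{ineq}}^\top\Gamma R_{\mathrm{ineq}}\tilde\xi-R_{\mathrm{ineq}}^\top\Gamma\tilde\nu_{\mathrm{ineq}}$. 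Substituting into the third block row,
\[
\tfrac{\eta}{\rho}\big(g-\bar g-\tilde\nu_{\mathrm{ineq}}\big)=\eta\Gamma R_{\mathrm{ineq}}\tilde\xi+\tfrac{\eta}{\rho}(\Gamma-\mathbf I)\tilde\nu_{\mathrm{ineq}} .
\]
Collecting the three block rows reproduces $F(\theta)$ in \eqref{eq:F_theta}.

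The only delicate point is this last step: making sure $\gamma_i$ is well defined when $\phi_i=\bar\phi_i$, and noting that $\gamma_i$ really depends on $(\phi_i,\bar\phi_i)$. Since $\bar\phi_i$ is a constant, writing $\gamma_i(\phi_i)$ is legitimate. No smoothness of $g$ is required; only monotonicity and the Lipschitz property of $\max(\cdot,0)$ are used.
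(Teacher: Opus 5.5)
Your proposal is correct and follows the paper's proof essentially step for step: you use the same integral mean-value matrix $H(\xi)=\int_0^1\nabla^2 J(\bar\xi+s\tilde\xi)\,ds$ and the same componentwise factorization $g_i-\bar g_i=\gamma_i\tilde\phi_i$. Your difference-quotient definition of $\gamma_i$ coincides with the paper's explicit four-case formula, which also handles $\phi_i=\bar\phi_i$ automatically by giving $0$ or $1$; the only cosmetic difference is that the paper fixes $\bar v_{\mathrm{pd}}=\mathbf{0}$ with $\bar\theta$ the KKT point, whereas you allow a general constant $\bar v_{\mathrm{pd}}$, which does not affect the argument.
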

\begin{proof}
    Recall that $\bar{\theta} = (\bar{\xi},\bar{\nu}_{\mathrm{eq}},\bar{\nu}_{\mathrm{ineq}})$ is the unique solution of the KKT conditions \eqref{eq:KktConditions}, and is at the same time the unique equilibrium of the \gls{a-pdgd} \eqref{eq:PdControllerWithInput} when $v_\mathrm{pd}=\bar{v}_\mathrm{pd}=\boldsymbol{0}$. By defining  $\tilde{\theta}=\theta-\bar{\theta}$ and $\tilde{v}_\mathrm{pd}=v_\mathrm{pd}-\bar{v}_\mathrm{pd}$, it holds that
\begin{equation}\label{eq:shifted_augPD}
\dot{\tilde{\theta}}=
\begin{pmatrix}
- \big(\nabla J(\xi)-\nabla J(\bar{\xi})\big)
- R_{\mathrm{eq}}^\top \tilde{\nu}_{\mathrm{eq}}
- R_{\mathrm{ineq}}^\top
\big(
g(\xi,\nu_{\mathrm{ineq}})
-
g(\bar{\xi},\bar{\nu}_{\mathrm{ineq}})
\big)
\\[1mm]
\eta R_{\mathrm{eq}}\tilde{\xi}
\\[1mm]
\frac{\eta}{\rho}
\Big(
g(\xi,\nu_{\mathrm{ineq}})
-
g(\bar{\xi},\bar{\nu}_{\mathrm{ineq}})
-
\tilde{\nu}_{\mathrm{ineq}}
\Big)
\end{pmatrix} + B_{\mathrm{pd}} \tilde{v}_{\mathrm{pd}},
\end{equation}

By Assumption~\ref{assump:uConvex} and the mean value theorem, then \cite{qu_li_augPD2} 
\begin{align}
\nabla J(\xi)-\nabla J(\bar{\xi}) = H(\xi)(\xi- \bar{\xi}),
\label{eq:JtoH}
\end{align}
where
\[
H(\xi) := \int_0^1 \nabla^2 J\big(\bar{\xi} + s(\xi-\bar{\xi})\big)\, ds.
\]
Note already that due to Assumption~\ref{assump:uConvex},   \(\mu \mathbf{I} \preceq H(\xi) \preceq \ell \mathbf{I}\) for all \(\xi\), which establishes \eqref{eq:bounds_H_xi}.

Moving on, we focus now on finding an alternative representation for the term $g(\xi,\nu_{\mathrm{ineq}})
-
g(\bar{\xi},\bar \nu_\mathrm{ineq})$ in \eqref{eq:shifted_augPD}. For that purpose, let us introduce $\phi_i$ as in \eqref{eq:phi_i}.
Then, $g(\xi,\nu_{\text{ineq}})
\;=\;
\max\!\Big(
\nu_{\text{ineq}} + \rho(R_{\text{ineq}}\xi - h),\; 0
\Big),$ from \eqref{eq:vector_g}, can be written component-wise as
\begin{align}
g(\xi,\nu_{\text{ineq}})_i \;\coloneqq\; \max(\phi_i,0). \label{eq:vector_g_comp}
\end{align}
For brevity, we suppress in the sequel the dependence of $g_i$ on $(\xi,\nu_{\text{ineq}})$. Likewise, define
$\bar\phi_i \coloneqq \bar\nu_{\mathrm{ineq},i} + \rho\,(R_{\mathrm{ineq}}\bar\xi - h)_i$ so that
$\bar g_i = \max(\bar\phi_i,0)$.
Then the following identity holds:
\[
\max(\phi_i,0)-\max(\bar\phi_i,0)=\gamma_i(\phi_i)\,(\phi_i-\bar\phi_i),
\]
where $\gamma_i$, defined as,
\begin{equation}\label{eq:gamma_i}
\gamma_i(\phi_i)\;\coloneqq\;
\begin{cases}
1, & \phi_i>0,\ \bar{\phi}_i>0,\\[1mm]
0, & \phi_i\le 0,\ \bar{\phi}_i\le 0,\\[2mm]
\dfrac{\phi_i}{\phi_i-\bar{\phi}_i}, & \phi_i>0,\ \bar{\phi}_i\le 0,\\[3mm]
\dfrac{-\bar{\phi}_i}{\phi_i-\bar{\phi}_i}, & \phi_i\le 0,\ \bar{\phi}_i>0,
\end{cases}
\end{equation}
can be directly shown to satisfy $\gamma_i(\phi_i)\in [0,1]$. Moreover, by noting that
\[
\phi_i-\bar\phi_i
=
\tilde{\nu}_{\mathrm{ineq},i}+\rho\,(R_{\mathrm{ineq}}\tilde{\xi})_i,
\]
we obtain that
\begin{align}
g_i-\bar g_i
=
\gamma_i(\phi_i)\Big(\tilde{\nu}_{\mathrm{ineq},i}
+\rho\,(R_{\mathrm{ineq}}\tilde{\xi})_i\Big).
\end{align}
Now define the diagonal matrix
\[
\Gamma(\phi):=\operatorname{diag}(\gamma_1(\phi_1),\ldots,\gamma_{n_{\mathrm{ineq}}}(\phi_{n_{\mathrm{ineq}}})).
\]
Then,
\begin{align}
    g - \bar{g} =  \Gamma(\phi) \tilde{\nu}_{\mathrm{ineq}} + \rho \Gamma(\phi) R_{\mathrm{ineq}} \tilde{\xi}
    \label{eq:gTilde}
\end{align}
Using \eqref{eq:JtoH} and \eqref{eq:gTilde} we can rewrite \eqref{eq:shifted_augPD} as 
\begin{equation}\label{eq:shifted_augPD_withF}
\dot{\tilde{\theta}}=
\begin{bmatrix}
- H(\xi) - \rho R_{\mathrm{ineq}}^\top \Gamma(\phi) R_{\mathrm{ineq}}
&
- R_{\mathrm{eq}}^\top
&
- R_{\mathrm{ineq}}^\top \Gamma (\phi)
\\[2mm]
\eta R_{\mathrm{eq}} & \mathbf{0} & \mathbf{0}
\\[2mm]
\eta \Gamma(\phi) R_{\mathrm{ineq}} & \mathbf{0} &
\frac{\eta}{\rho}(\Gamma(\phi) - \mathbf{I})
\end{bmatrix} \begin{bmatrix}
    \tilde{\xi} \\ \tilde{\nu}_{\mathrm{eq}} \\ \tilde{\nu}_{\mathrm{ineq}}
\end{bmatrix} + B_{\mathrm{pd}} \tilde{v}_{\mathrm{pd}}.
\end{equation}
\end{proof}

\begin{proposition}\label{prop:explicity_form_Q}
    Recall the matrix $P_2 =
\begin{bmatrix}
\eta c\, \mathbf{I} & \eta R_{\mathrm{eq}}^{\top} & \eta R_{\mathrm{ineq}}^{\top} \\
\eta R_{\mathrm{eq}} & c\, \mathbf{I} & \mathbf{0}\\
\eta R_{\mathrm{ineq}} & \mathbf{0} & c\, \mathbf{I}
\end{bmatrix}
$ from \eqref{eq:matrix_P_2}, the matrix $F(\theta)$ from \eqref{eq:F_theta}, and $\tau_2=\frac{\eta\kappa_1}{2c}$ from \eqref{eq:tau2_def}. Then, 
\begin{align}\label{eq:explicit_form_Q}
  Q & := -F(\theta)^\top P_2-P_2 F(\theta)-\tau_2 P_2 \nonumber\\
  & = \begin{bmatrix}
       2\eta c\,H
+ 2\eta c\rho R^\top \Gamma_2 R
- 2\eta^2 R^\top \Gamma_1 R
- \frac{\eta^2\kappa_1}{2}\,\mathbf{I} & \eta\Bigl(
{
H +
{\rho R^\top \Gamma_2 R - \frac{\eta\kappa_1}{2c}\mathbf{I}}
}
\Bigr) R^\top
+
\eta{\frac{\eta}{\rho}R^\top(\mathbf{I}-\Gamma_1)}\\
*^\top & \eta(\Gamma_1 R R^\top + R R^\top\Gamma_1)
+ \frac{2\eta c}{\rho}(\mathbf{I}-\Gamma_1)
- \frac{\eta\kappa_1}{2}\,\mathbf{I}
  \end{bmatrix}.
\end{align}

\hfill $\square$
\end{proposition}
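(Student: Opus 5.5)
The plan is a direct block computation. The first step is to rewrite $F(\theta)$ from Proposition~\ref{prop:shifted_primal_dual} in the same two-block partition used for $P_2$ in the proof of Proposition~\ref{prop:lower_bound_P2}. That partition is $\tilde\theta=\mathrm{col}(\tilde\xi,\tilde\nu)$ with $\tilde\nu=\mathrm{col}(\tilde\nu_{\mathrm{eq}},\tilde\nu_{\mathrm{ineq}})$ and the stacked $R$ from \eqref{eq:R_stacked}, so that
$P_2=\begin{bmatrix}\eta c\mathbf I & \eta R^\top\\ \eta R & c\mathbf I\end{bmatrix}$.
The symbols $\Gamma_1,\Gamma_2$ in \eqref{eq:explicit_form_Q} are read as $\Gamma_1:=\mathrm{blkdiag}(\mathbf I_{n_{\mathrm{eq}}},\Gamma(\phi))$ and $\Gamma_2:=\mathrm{blkdiag}(\mathbf 0_{n_{\mathrm{eq}}},\Gamma(\phi))$.

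With these definitions, \eqref{eq:F_theta} becomes
\[
F=\begin{bmatrix}F_{11}&F_{12}\\ F_{21}&F_{22}\end{bmatrix},\quad
\begin{aligned}
F_{11}&=-H-\rho R^\top\Gamma_2R, & F_{12}&=-R^\top\Gamma_1,\\
F_{21}&=\eta\Gamma_1R, & F_{22}&=\tfrac{\eta}{\rho}(\Gamma_1-\mathbf I).
\end{aligned}
\]
This is checked block by block. For $F_{11}$, $R^\top\Gamma_2R=R_{\mathrm{ineq}}^\top\Gamma R_{\mathrm{ineq}}$. For $F_{12}$ and $F_{21}$, $\Gamma_1$ leaves the $R_{\mathrm{eq}}$ rows untouched. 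For $F_{22}$, $\Gamma_1-\mathbf I$ vanishes on the equality block and equals $\Gamma-\mathbf I$ on the inequality block. The facts used are that $H$, $\Gamma_1$ and $\Gamma_2$ are symmetric (the last two diagonal).

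Next I compute $P_2F$ blockwise and obtain $F^\top P_2$ as its transpose:
\begin{align*}
(P_2F)_{11}&=-\eta cH-\eta c\rho R^\top\Gamma_2R+\eta^2R^\top\Gamma_1R,\\
(P_2F)_{12}&=-\eta cR^\top\Gamma_1+\tfrac{\eta^2}{\rho}R^\top(\Gamma_1-\mathbf I),\\
(P_2F)_{22}&=-\eta RR^\top\Gamma_1+\tfrac{c\eta}{\rho}(\Gamma_1-\mathbf I).
\end{align*}

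The $(1,2)$ block of $F^\top P_2$ is $\eta F_{11}R^\top+cF_{21}^\top=-\eta HR^\top-\eta\rho R^\top\Gamma_2RR^\top+\eta cR^\top\Gamma_1$. When it is added to $(P_2F)_{12}$, the two $\eta cR^\top\Gamma_1$ terms cancel. This cancellation is the main structural point, and it is why $P_2$ has the particular $\eta$-weighted off-diagonal coupling.

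Finally, I subtract $\tau_2P_2$ with $\tau_2=\eta\kappa_1/(2c)$, which contributes $-\tfrac{\eta^2\kappa_1}{2}\mathbf I$, $-\tfrac{\eta^2\kappa_1}{2c}R^\top$ and $-\tfrac{\eta\kappa_1}{2}\mathbf I$ to the three blocks. Collecting terms, with $(2,1)=(1,2)^\top$, gives exactly \eqref{eq:explicit_form_Q}.

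I expect no real obstacle, since the argument is pure bookkeeping. The only delicate step is setting up the stacked representation with $\Gamma_1$ and $\Gamma_2$ correctly and tracking transposes and $\eta$ factors in the off-diagonal block.
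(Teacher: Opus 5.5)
Your proposal is correct and follows essentially the same route as the paper: rewrite $F(\theta)$ in the two-block partition with $\Gamma_1=\mathrm{blkdiag}(\mathbf I,\Gamma(\phi))$ and $\Gamma_2=\mathrm{blkdiag}(\mathbf 0,\Gamma(\phi))$, compute the blocks of $P_2F$ and $F^\top P_2$ (where the $\eta c R^\top\Gamma_1$ terms cancel in the off-diagonal block), and subtract $\tau_2P_2$. The only cosmetic difference is that you obtain $F^\top P_2$ as the transpose of $P_2F$ (valid since $P_2$ is symmetric) instead of multiplying it out separately as the paper does.
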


\begin{proof}
We begin by writing $F(\theta)$ as a 2-by-2 block matrix. For that purpose, let us define
\begin{equation}\label{eq:Gamma_1and_Gamma_2}
\Gamma_1(\phi)=\operatorname{blkdiag}(\mathbf{I},\Gamma(\phi)),
\qquad 
\Gamma_2(\phi)=\operatorname{blkdiag}(\mathbf{0},\Gamma(\phi)), 
\end{equation}
where we recall that $\Gamma$ is the diagonal function matrix defined in \eqref{eq:diag_mat_GAMMA}, with $\phi$ defined by components in \eqref{eq:phi_i}.
Then,
\[
F(\theta)=
\begin{bmatrix}
-\,H\;-\;\rho\,R^{\top}\Gamma_{2}\,R
&
-\,R^{\top}\Gamma_{1}
\\[2mm]
\eta\,\Gamma_{1}R
&
\dfrac{\eta}{\rho}\big(\Gamma_{1}-\mathbf{I}\big)
\end{bmatrix},\]
where $R$ is defined as in \eqref{eq:R_stacked}.
Similarly, for the (symmetric, constant) matrix $P_2$ it is possible to write the following representation:
\[
P_2 =
\begin{bmatrix}
\eta c\, \mathbf{I} & \hspace{3mm} & \eta R^{\top} \\[0mm]
\hspace{0mm}        &              & \hspace{0mm}   \\[0mm]
\eta R              & \hspace{3mm} & c\, \mathbf{I}
\end{bmatrix}.
\]
It follows that
\begin{align}
P_2F(\theta) & = \begin{bmatrix}
\eta c\, \mathbf{I} & \hspace{3mm} & \eta R^{\top} \\[0mm]
\hspace{0mm}        &              & \hspace{0mm}   \\[0mm]
\eta R              & \hspace{3mm} & c\, \mathbf{I}
\end{bmatrix} \begin{bmatrix}
-\,H\;-\;\rho\,R^{\top}\Gamma_{2}\,R
&
-\,R^{\top}\Gamma_{1}
\\[2mm]
\eta\,\Gamma_{1}R
&
\dfrac{\eta}{\rho}\big(\Gamma_{1}-\mathbf{I}\big)
\end{bmatrix} \nonumber \\
& = 
\begin{bmatrix}
-\eta c\,H-\eta c\rho\,R^\top\Gamma_2 R+\eta^2\,R^\top\Gamma_1 R
& \hspace{3mm} &
-\eta c\,R^\top\Gamma_1+\dfrac{\eta^2}{\rho}\,R^\top(\Gamma_1-\mathbf I)
\\[0mm]
\hspace{0mm} & & \hspace{0mm}
\\[0mm]
-\eta RH-\eta\rho\,R R^\top\Gamma_2 R+c\eta\,\Gamma_1 R
& \hspace{3mm} &
-\eta\,R R^\top\Gamma_1+\dfrac{c\eta}{\rho}(\Gamma_1-\mathbf I)
\end{bmatrix}
\end{align}
and
\begin{align}
F^\top (\theta) P_2 & =
\begin{bmatrix}
    -H^\top-\rho R^\top \Gamma_2R & \eta R^\top \Gamma_1\\
    -\Gamma_1 R & \frac{\eta}{\rho}(\Gamma_1-\mathbf{I})
\end{bmatrix} \begin{bmatrix}
\eta c\, \mathbf{I} & \hspace{3mm} & \eta R^{\top} \\[0mm]
\hspace{0mm}        &              & \hspace{0mm}   \\[0mm]
\eta R              & \hspace{3mm} & c\, \mathbf{I}
\end{bmatrix} \nonumber\\
&  = 
\begin{bmatrix}
-\eta c\,H-\eta c\rho\,R^\top\Gamma_2 R
+\eta^2\,R^\top\Gamma_1 R
& \hspace{3mm} &
-\eta H R^\top
-\eta\rho\,R^\top\Gamma_2 R R^\top
+c\eta\,R^\top\Gamma_1\\[0mm]
\hspace{0mm} & & \hspace{0mm}\\[0mm]
-\eta c\,\Gamma_1 R
+\frac{\eta^2}{\rho}\,(\Gamma_1-\mathbf I)R
& \hspace{3mm} &
-\eta\,\Gamma_1 R R^\top
+\frac{c\eta}{\rho}(\Gamma_1-\mathbf I)
\end{bmatrix}
\end{align}
Consequently,
\begin{align}
    Q & =-F^\top(\theta) P_2 - P_2F(\theta) - \tau_2 P_2 \nonumber\\
    & = -\begin{bmatrix}
-\eta c\,H-\eta c\rho\,R^\top\Gamma_2 R
+\eta^2\,R^\top\Gamma_1 R
& \hspace{3mm} &
-\eta H R^\top
-\eta\rho\,R^\top\Gamma_2 R R^\top
+c\eta\,R^\top\Gamma_1\\[0mm]
\hspace{0mm} & & \hspace{0mm}\\[0mm]
-\eta c\,\Gamma_1 R
+\frac{\eta^2}{\rho}\,(\Gamma_1-\mathbf I)R
& \hspace{3mm} &
-\eta\,\Gamma_1 R R^\top
+\frac{c\eta}{\rho}(\Gamma_1-\mathbf I)
\end{bmatrix} \nonumber\\
& \phantom{=} -\begin{bmatrix}
-\eta c\,H-\eta c\rho\,R^\top\Gamma_2 R+\eta^2\,R^\top\Gamma_1 R
& \hspace{3mm} &
-\eta c\,R^\top\Gamma_1+\dfrac{\eta^2}{\rho}\,R^\top(\Gamma_1-\mathbf I)
\\[0mm]
\hspace{0mm} & & \hspace{0mm}
\\[0mm]
-\eta RH-\eta\rho\,R R^\top\Gamma_2 R+c\eta\,\Gamma_1 R
& \hspace{3mm} &
-\eta\,R R^\top\Gamma_1+\dfrac{c\eta}{\rho}(\Gamma_1-\mathbf I) \end{bmatrix} \nonumber\\
& \phantom{=} -\underbrace{\frac{\eta\kappa_1}{2c}}_{\tau_2} \begin{bmatrix}
\eta c\, \mathbf{I} & \hspace{3mm} & \eta R^{\top} \\[0mm]
\hspace{0mm}        &              & \hspace{0mm}   \\[0mm]
\eta R              & \hspace{3mm} & c\, \mathbf{I}
\end{bmatrix} \nonumber\\
& = \begin{bmatrix}
       2\eta c\,H
+ 2\eta c\rho R^\top \Gamma_2 R
- 2\eta^2 R^\top \Gamma_1 R
- \frac{\eta^2\kappa_1}{2}\,\mathbf{I} & \eta\Bigl(
{
H +
{\rho R^\top \Gamma_2 R - \frac{\eta\kappa_1}{2c}\mathbf{I}}
}
\Bigr) R^\top
+
\eta{\frac{\eta}{\rho}R^\top(\mathbf{I}-\Gamma_1)}\\
*^\top & \eta(\Gamma_1 R R^\top + R R^\top\Gamma_1)
+ \frac{2\eta c}{\rho}(\mathbf{I}-\Gamma_1)
- \frac{\eta\kappa_1}{2}\,\mathbf{I}
  \end{bmatrix},
\end{align}
as stated.
\end{proof}

The following proposition is key to establish that $Q$ in \eqref{eq:explicit_form_Q} is positive semi-definite.
\begin{proposition}\label{prop:technical_lemma_na_li}{(\cite[Lemma~6]{qu_li_augPD})}
Let $c$ be chosen as in~\eqref{eq:ChoiceOfC_all}. Recall the diagonal function matrix $\Gamma_1$ from~\eqref{eq:Gamma_1and_Gamma_2} (see also~\eqref{eq:properties_gamma_i}).
Then, for $\eta>0$ and $\rho>0$ as introduced in~\eqref{eq:primal_dual_NON_EXPLICIT} and~\eqref{eq:Hrho_FTC}, respectively, the following inequality holds:
\begin{equation}\label{eq:lower_bound_Q2}
\eta(\Gamma_1 R R^\top + R R^\top\Gamma_1)
+ \frac{2\eta c}{\rho}(\mathbf{I}-\Gamma_1)
\succeq 
\frac{3}{2}\eta R R^\top,
\end{equation}
where $R$ is defined in~\eqref{eq:R_stacked}.
\hfill $\square$
\end{proposition}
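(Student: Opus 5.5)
The plan is to reduce \eqref{eq:lower_bound_Q2} to an elementary matrix inequality in the diagonal matrix $D:=\mathbf I-\Gamma_1$. Then a single Young/Cauchy--Schwarz step and the condition \eqref{eq:ChoiceOfC2}, $c\ge\kappa_2\rho$, should finish it. Only the structure of $\Gamma_1$ matters. By \eqref{eq:Gamma_1and_Gamma_2} and \eqref{eq:properties_gamma_i}, $\Gamma_1$ is diagonal with entries in $[0,1]$ (identity on the equality block). Hence $D$ is diagonal, satisfies $\mathbf 0\preceq D\preceq\mathbf I$, and $D^2\preceq D$. Write $M:=RR^\top$, so that $\mathbf 0\preceq M\preceq\kappa_2\mathbf I$ by \eqref{eq:BoundOnR}.

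First, divide \eqref{eq:lower_bound_Q2} by $\eta>0$ and substitute $\Gamma_1=\mathbf I-D$. Since $\Gamma_1M+M\Gamma_1=2M-DM-MD$, the claim becomes equivalent to
\begin{equation*}
\tfrac12 M-(DM+MD)+\tfrac{2c}{\rho}D\succeq\mathbf 0 .
\end{equation*}

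Second, bound the indefinite cross term $DM+MD$. For any vector $x$, Cauchy--Schwarz in the semi-inner product induced by $M\succeq\mathbf 0$, followed by Young's inequality with weight $\tfrac12$, gives
\begin{equation*}
2x^\top DMx\le 2\|M^{1/2}Dx\|\,\|M^{1/2}x\|\le\tfrac12 x^\top Mx+2x^\top DMDx .
\end{equation*}
In matrix form this reads $DM+MD\preceq\tfrac12 M+2DMD$. The left-hand side of the reduced inequality is therefore bounded below by $\tfrac{2c}{\rho}D-2DMD$.

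Third, estimate $DMD\preceq\kappa_2D^2\preceq\kappa_2D$, using $M\preceq\kappa_2\mathbf I$ together with $D^2\preceq D$. It thus suffices that $\bigl(\tfrac{2c}{\rho}-2\kappa_2\bigr)D\succeq\mathbf 0$, i.e.\ $c\ge\kappa_2\rho$. This is guaranteed by \eqref{eq:ChoiceOfC2}, because $c=\max(c_1,c_2,c_3)\ge c_2$.

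The only delicate point is choosing the Young weight. It must absorb exactly the $\tfrac12M$ slack left after subtracting $\tfrac32M$, while leaving a term that is controlled by $D$ itself rather than by $M$. The weight $\tfrac12$ above does both, and it is what makes the threshold on $c$ coincide with \eqref{eq:ChoiceOfC2}. If this split turned out too lossy, the fallback would be the blockwise argument of \cite[Lemma~6]{qu_li_augPD}.
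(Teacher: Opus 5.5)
Your proof is correct and reaches the same threshold $c\ge\kappa_2\rho$ as the paper, but by a different route.

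The paper's route is a vertex reduction:
\begin{itemize}
\item it uses that the left-hand side is affine in the diagonal entries of $\Gamma_1$, and writes it as a convex combination of its values at the $2^{n_c}$ vertices of $[0,1]^{n_c}$;
\item it permutes each vertex to $\Gamma_1=\mathrm{diag}(\mathbf I_k,\mathbf 0)$;
\item it verifies the resulting block inequality by showing $\tfrac{4c}{\rho}\mathbf I-3\Sigma_2\succeq\Sigma_2$, and noting that what remains is $RR^\top$ conjugated by the sign flip $\mathrm{blkdiag}(\mathbf I_k,-\mathbf I_{n_c-k})$.
\end{itemize}

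Your Young step is equivalent to the exact identity
\[
\tfrac12 M-(DM+MD)+\tfrac{2c}{\rho}D=\tfrac12(\mathbf I-2D)M(\mathbf I-2D)+2\Bigl(\tfrac{c}{\rho}D-DMD\Bigr).
\]
At a vertex, $\mathbf I-2D$ is exactly the paper's sign-flip matrix. There $D^2=D$, so the remainder becomes the paper's condition $\Sigma_2\preceq\tfrac{c}{\rho}\mathbf I$. You are therefore performing the same completion of squares, but for an arbitrary diagonal $D$ with entries in $[0,1]$. The inequality $D^2\preceq D$ does the work that the vertex reduction does in the paper.

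Your version is shorter and needs neither the multilinear interpolation nor the permutation assumed without loss of generality. It also makes clear that only $\mathbf 0\preceq RR^\top\preceq\kappa_2\mathbf I$ and $c\ge c_2$ are used. The paper's version stays closer to the structure of Qu and Li's original lemma and reduces the check to finitely many constant matrices, but it yields nothing stronger.
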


\begin{proof}
Let us denote by $d_{1,1},d_{1,2},...,d_{1,n_\mathrm{c}}$ the main diagonal elements of $\Gamma_1=\operatorname{blkdiag}(\mathbf{I},\Gamma(\phi))$, with $\Gamma$ in \eqref{eq:diag_mat_GAMMA} and $\phi$ defined by components in \eqref{eq:phi_i}, and let us refer by $M(d_{1,1},...,d_{1,n_c})$ to  the  matrix in the left-hand side of \eqref{eq:lower_bound_Q2}:
\begin{equation}\label{eq:matrix_M_gamma}
M(d_{1,1},...,d_{1,n_c})
:= \eta\big(\Gamma_1 R R^\top + R R^\top \Gamma_1\big)
   + \frac{2\eta c}{\rho}\big(\mathbf{I} - \Gamma_1\big),
\end{equation}
where $\eta>0$, $\rho>0$ and $c > 0$ are fixed constants. Now, since  $\Gamma_1$ is diagonal, it holds that each entry of
$\Gamma_1 R R^\top$, $R R^\top \Gamma_1$, and $\mathbf{I} - \Gamma_1$
depends linearly on each $d_{1,i}$.
Then, for fixed $R, \eta, c$ and $\rho$, the mapping $
(d_{1,1},\dots,d_{1,n_c}) \mapsto M_\gamma(d_{1,1},\dots,d_{1,n_c})
$
is affine in each variable $d_{1,i}$.   Moreover, note that $d_{1,i}$ is either 1, or equal to some function $\gamma_j(\phi_j)$ introduced in \eqref{eq:gamma_i}. Then,   each   $d_{1,i}$ satisfies 
\begin{equation}\label{eq:constraints_d_js}
    d_{1,i}\in[0,1].
\end{equation}
These facts will be exploited to establish the following inequality:
\begin{equation}\label{eq:aux_low_bound_M_ONE}
M(d_{1,1},\dots,d_{1,n_c}) \succeq \frac{3}{2}\,\eta R R^\top,
\quad \forall d_{1,i} \in [0,1],
\end{equation}
from which \eqref{eq:lower_bound_Q2} would directly follow. For that purpose, note that since $M$ is affine in $d_{1,1},...,d_{1,n_c}$, and the unit cube $[0,1]^{n_c}$ is a convex polytope, it holds that $M$ can be written as a convex combination of $2^{n_c}$ constant matrices  obtained by evaluating $M$ at each  vertex $b = (b_1,\dots,b_{n_c}) \in \{0,1\}^{n_c}$ of $[0,1]^{n_c}$. That is, 
\begin{subequations}
\begin{equation}
M(d_{1,1},\dots,b_{1,n_c})
= \sum_{b \in \{0,1\}^{n_c}}
   \alpha_b(d_{1,1},...,d_{1,n_c})\, M(b_1,\dots,b_{n_c}),
\end{equation}
with coefficients $\alpha_b$ given by
\begin{equation}
\alpha_b(d_{1,1},...,d_{1,n_c})
:= \prod_{i : b_i = 1} d_{1,i} \;\prod_{i : b_i = 0} (1 - d_{1,i})
\end{equation}
and satisfying
\begin{equation}
\alpha_b(d_{1,1},...,d_{1,n_c}) \ge 0,\quad \sum_{b \in \{0,1\}^{n_c}} \alpha_b(d_{1,1},...,d_{1,n_c}) = 1.
\end{equation}
\end{subequations}
Then, to establish \eqref{eq:aux_low_bound_M_ONE}, it suffices to show that
\begin{equation}\label{eq:simple_suff_cond_Mb}
M(b_1,\dots,b_{n_c}) \succeq \frac{3}{2}\,\eta R R^\top
\quad \forall b \in \{0,1\}^{n_c}.
\end{equation}
To that end,  let $b\in \{0,1\}^{n_c}$ be arbitrary, and  let us assume that each of the first $k$ main diagonal entries of $\Gamma_1$ are equal to one, and that the remaining ones are zero\footnote{This assumption is without loss of generality. Indeed, let $b\in \{0,1\}^{n_c}$ be arbitrary, and define $k=\sum_i b_i$. Now, choose a permutation matrix \cite{horn} $P$ such that $P \Gamma_{1} P^\top
= \Gamma_{1,k} := \operatorname{diag}({1,\dots,1_k},0,\dots,0).
$
If we define  
$
\widetilde{R R^\top} := P R R^\top P^\top
$
and use the identity $P^\top P = \mathbf{I}$, then we obtain that
$
P(\Gamma_{1} R R^\top)P^\top
= (P\Gamma_{1} P^\top)(P R R^\top P^\top)
= \Gamma_{1,k} \widetilde{R R^\top}
$, $
P(R R^\top\Gamma_{1})P^\top
= \widetilde{R R^\top}\Gamma_{1,k}$, and
$
P(\mathbf{I}-\Gamma_{1})P^\top = \mathbf{I} - \Gamma_{1,k} 
$. 
Then, the permuted matrix
$
\widetilde M(b) := P M(b) P^\top
$
satisfies $
\widetilde M(b)
= \eta(\Gamma_{1,k} \widetilde{R R^\top}
      + \widetilde{R R^\top}\Gamma_{1,k})
  + \frac{2\eta c}{\rho}(\mathbf{I} - \Gamma_{1,k}).$
Since orthogonal congruence preserves the semi-definiteness order \cite{horn}, we have that $
M_\gamma(b)\succeq \tfrac{3}{2}\eta R R^\top$  if and only if $
\widetilde M \succeq \tfrac{3}{2}\eta \widetilde{R R^\top}.$
Hence we may assume without loss of generality that the ones in $\Gamma_1$
occupy the first $k$ diagonal entries.}. Then, we can write the product $R R^\top $ as follows:
\begin{equation}\label{eq:RRtop_terms_SIGMA}
R R^\top =
\begin{bmatrix}
\Sigma_1 & \Sigma_3 \\
\Sigma_3^\top & \Sigma_2
\end{bmatrix},
\end{equation}
where $\Sigma_1\in\mathbb{R}^{k\times k}$,
$\Sigma_2\in\mathbb{R}^{({n_c}-k)\times({n_c}-k)}$,
$\Sigma_3\in\mathbb{R}^{k\times({n_c}-k)}$.
It follows that
\[
\Gamma_{1} R R^\top
= \begin{bmatrix}\Sigma_1 & \Sigma_3 \\ 0 & 0\end{bmatrix},\qquad
R R^\top \Gamma_{1}
= \begin{bmatrix}\Sigma_1 & 0 \\ \Sigma_3^\top & 0\end{bmatrix},\quad 
\mathbf{I}-\Gamma_{1} = \begin{bmatrix}0&0\\0&\mathbf{I}_{{n_c}-k}\end{bmatrix}.
\]
Then,
\begin{align}
M(b_1,...,b_{n_c}) & = M({1,\dots,1_k},0,\dots,0) \nonumber\\
& = \eta\left(\begin{bmatrix}
    \mathbf{I}_k & 0\\
    0 & \mathbf{0}_{n_c-k}
\end{bmatrix}\begin{bmatrix}
\Sigma_1 & \Sigma_3 \\
\Sigma_3^\top & \Sigma_2
\end{bmatrix}+\begin{bmatrix}
\Sigma_1 & \Sigma_3 \\
\Sigma_3^\top & \Sigma_2
\end{bmatrix}\begin{bmatrix}
    \mathbf{I}_k & 0\\
    0 & \mathbf{0}_{n_c-k}
\end{bmatrix} \right)+\frac{2\eta c}{\rho}\left(\begin{bmatrix}
    \mathbf{I}_k & 0\\
    0 & \mathbf{I}_{n_c-k}
\end{bmatrix}-\begin{bmatrix}
    \mathbf{I}_k & 0\\
    0 & \mathbf{0}_{n_c-k}
\end{bmatrix}  \right) \nonumber\\
& = \eta \left(\begin{bmatrix}
    \Sigma_1 & \Sigma_3\\
    0 & 0 
\end{bmatrix}+\begin{bmatrix}
    \Sigma_1 & 0\\
    \Sigma_3^\top & 0
\end{bmatrix} \right)+\frac{2\eta c}{\rho}\begin{bmatrix}
    \mathbf{0}_k & 0\\
    0 & \mathbf{I}_{n_c-k}
\end{bmatrix} \nonumber\\
& =
\begin{bmatrix}
2\eta\Sigma_1 & \eta\Sigma_3 \\
\eta\Sigma_3^\top &
\frac{2\eta c}{\rho} \mathbf{I}_{{n_c}-k}
\end{bmatrix}.
\end{align}
Consequently, \eqref{eq:simple_suff_cond_Mb} is equivalent to the following expression:
\begin{equation}\label{eq:ineq_simple_TWO}
\frac{\eta}{2} \begin{bmatrix}
    \Sigma_1 & -\Sigma_3\\
    -\Sigma_3^\top & \frac{4c}{\rho}\mathbf{I}_{n_c-k}-3\Sigma_2
\end{bmatrix}\succeq \textbf{0} .
\end{equation}
From Assumption~\ref{assump:BoundOnR}, $R R^\top \preceq \kappa_2 \mathbf{I}$, where $\kappa_2$ is in \eqref{eq:BoundOnR}. Considering \eqref{eq:RRtop_terms_SIGMA}, it holds that $\Sigma_2\preceq \kappa_2 \mathbf{I}_{n_c-k}$. Moreover, since  $c\geq \kappa_2 \rho $ (see \eqref{eq:ChoiceOfC2}), it follows that the sub-block $ \frac{4c}{\rho}\mathbf{I}_{n_c-k}-3\Sigma_2$ satisfies the following 
\begin{equation}
    \frac{4c}{\rho}\mathbf{I}_{n_c-k}-3\Sigma_2 \succeq \Sigma_2.
\end{equation}
Then, the matrix in the left-hand side of \eqref{eq:ineq_simple_TWO} can be lower bounded as follows:
\begin{equation}\label{eq:final_comparison_SIGMAS}
    \frac{\eta}{2} \begin{bmatrix}
    \Sigma_1 & -\Sigma_3\\
    -\Sigma_3^\top & \frac{4c}{\rho}\mathbf{I}_{n_c-k}-3\Sigma_2
\end{bmatrix}\succeq \frac{\eta}{2} \begin{bmatrix}
    \Sigma_1 & -\Sigma_3\\
    -\Sigma_3^\top & \Sigma_2
\end{bmatrix}.
\end{equation}
The matrix in the right-hand side of \eqref{eq:final_comparison_SIGMAS} is congruent to $RR^\top$ with orthogonal matrix $\begin{bmatrix}\mathbf{I}_k & 0 \\ 0 & -\mathbf{I}_{{n_c}-k}\end{bmatrix}$. From Assumption~\ref{assump:BoundOnR}, $RR^\top$ is positive definite.  Therefore, \eqref{eq:ineq_simple_TWO} and consequently  \eqref{eq:simple_suff_cond_Mb}, which in chain implies \eqref{eq:aux_low_bound_M_ONE} and \eqref{eq:lower_bound_Q2},  hold true. This concludes the proof.

\end{proof}

\noindent \emph{Lemma~2:} Fix $\eta>0$ and $\epsilon\in(0,1)$,  let Assumptions~\ref{assump:uConvex} and \ref{assump:BoundOnR} hold, and  fix $c>0$ according to~\eqref{eq:ChoiceOfC_all}. Then,  $P_2$ in \eqref{eq:matrix_P_2} satisfies
\begin{equation}\label{eq:lower_bound_P_2}
P_2 \;\succeq\; \epsilon\,c\,\min(\eta,1)\,\mathbf I.
\end{equation}
Moreover, the \gls{a-pdgd}~\eqref{eq:PdControllerWithInput} is strictly shifted-passive   with quadratic storage function \(S_{\mathrm{c}}(\tilde{\theta})=\tilde{\theta}^{\top}P_2\tilde{\theta}\) and with respect to the input--output pair
\((\tilde{v}_{\mathrm{pd}},\, 2\,B_{\mathrm{pd}}^{\top} P_2 \tilde{\theta})\).
In particular,  it holds that
\[
\dot{S}_{\mathrm{c}}
\le -\tau_2\, \tilde{\theta}^{\top} P_2\, \tilde{\theta}
+ 2\,\tilde{\theta}^{\top} P_2 B_{\mathrm{pd}} \tilde{v}_{\mathrm{pd}}
\]
along any system trajectory, with 
\begin{align*}
\tau_2 &\coloneqq \frac{\eta \kappa_1}{2c}.
%\label{eq:tau2_def}    
\end{align*}
\hfill $\square$
\begin{proof}
Recall from Proposition~\ref{prop:lower_bound_P2} that $P_2 \succ \mathbf{0}$. Moreover, by Proposition~\ref{prop:shifted_primal_dual}, the Aug-PDGD \eqref{eq:PdControllerWithInput} is equivalent to its shifted dynamics \eqref{eq:shiftedPd-short}. Then, along any of its solutions the time derivative of $S_{\mathrm{c}}$ satisfies the following:
\begin{align}\label{eq:Lie_S-short}
\dot S_{\mathrm c}
&= \dot{\tilde{\theta}}^\top P_2 \tilde{\theta}
+ \tilde{\theta}^\top P_2 \dot{\tilde{\theta}} \nonumber\\
 & = \tilde{\theta}^\top (F^\top P_2 + P_2 F)\tilde{\theta}
+ 2\,\tilde{\theta}^\top P_2 B_{\mathrm{pd}}\tilde v_{\mathrm{pd}},
\end{align}
where for simplicity we have removed the dependency of $F$ with respect to $\theta$ (see \eqref{eq:F_theta}). To show shifted-passivity, it suffices to verify that
\[
F^\top P_2 + P_2 F \preceq -\tau_2 P_2,
\]
or, equivalently, that
\[
Q = -F^\top P_2 - P_2 F - \tau_2 P_2 \succeq \mathbf{0}.
\]
We recall from Proposition~\ref{prop:explicity_form_Q} that
\begin{align*}
Q & = \begin{bmatrix}
       2\eta c\,H
+ 2\eta c\rho R^\top \Gamma_2 R
- 2\eta^2 R^\top \Gamma_1 R
- \frac{\eta^2\kappa_1}{2}\,\mathbf{I} & \eta\Bigl(
{
H +
{\rho R^\top \Gamma_2 R - \frac{\eta\kappa_1}{2c}\mathbf{I}}
}
\Bigr) R^\top
+
\eta{\frac{\eta}{\rho}R^\top(\mathbf{I}-\Gamma_1)}\\
*^\top & \eta(\Gamma_1 R R^\top + R R^\top\Gamma_1)
+ \frac{2\eta c}{\rho}(\mathbf{I}-\Gamma_1)
- \frac{\eta\kappa_1}{2}\,\mathbf{I}
  \end{bmatrix}\\
  & =: \begin{bmatrix}
      Q_1 & Q_3\\
      Q_3^\top & Q_2
  \end{bmatrix}.
\end{align*}
By the Schur complement, to show that  \(Q\succeq \mathbf{0}\) it suffices to verify that
\[
Q_2\succ \mathbf{0},
\qquad 
Q_1 - Q_3 Q_2^{-1} Q_3^\top \succeq \mathbf{0}.
\]
From Proposition~\ref{prop:technical_lemma_na_li}, the condition \(c\ge \kappa_2 \rho\) is sufficient for  ensuring that
\begin{equation}\label{eq:aux_low_bound_Q2}
Q_2 
\succeq 
\frac{3}{2}\eta R R^\top - \frac{\eta\kappa_1}{2}\mathbf{I}.
\end{equation}
Since $ R R^\top\succeq \kappa_1 \mathbf{I}$ due to Assumption~\ref{assump:BoundOnR}, it holds that $Q_2\succ \textbf{0}$. 

We proceed to show that $Q_1 - Q_3 Q_2^{-1} Q_3^\top \succeq \mathbf{0}$.  To establish this, we will  first  show that the following holds:
\begin{subequations}
    \begin{align}
Q_1
& \succeq 
2\eta c H
-
h_1(c)
\mathbf{I} \label{eq:to_establish_low_bound_Q1}\\
Q_3 Q_2^{-1}Q_3^\top  & \preceq \eta \ell H+ (h_2(c)+h_3(c)+h_4(c))\mathbf{I}, \label{eq:to_establish_q3q2q3}
    \end{align}
    where 
    \begin{align}
h_1 & = 2\eta^2\kappa_2 + \tfrac{\eta^2\kappa_1}{2},\\
h_2(c) & = 2\eta \ell (\rho\kappa_2 + \frac{\eta\kappa_1}{2c})
+
\eta(\rho\kappa_2 + \frac{\eta\kappa_1}{2c})^2,\\
h_3(c) & =  \frac{2\eta^2}{\rho}
\Bigl(
\ell + \rho\kappa_2 + \frac{\eta\kappa_1}{2c}
\Bigr)
\frac{\kappa_2}{\kappa_1},\\
h_4 & =  \eta^3 \rho^{-2}
\frac{\kappa_2}{\kappa_1}.
    \end{align}
\end{subequations}

Before proceeding, we find it convenient to recall the following facts. The matrix $\Gamma_1$, defined in \eqref{eq:Gamma_1and_Gamma_2}, satisfies $\Gamma_1 \preceq \mathbf{I}$. The matrix $H$, defined in Proposition~\ref{prop:shifted_primal_dual}, is symmetric positive-definite and satisfies $\mu \mathbf{I}\preceq H \preceq \ell \mathbf{I}$. The matrix $R$, introduced in \eqref{eq:R_block_compact}, is such that $\kappa_1 \mathbf{I}\preceq RR^\top \preceq \kappa_2 \mathbf{I} $.

\textbf{Let us continue then by verifying \eqref{eq:to_establish_low_bound_Q1}}. Note directly that
\begin{equation}
    Q_1=  2\eta c\,H
+ 2\eta c\rho R^\top \Gamma_2 R
- 2\eta^2 R^\top \Gamma_1 R
- \frac{\eta^2\kappa_1}{2}\,\mathbf{I} \succeq 2\eta cH-2\eta^2R^\top \Gamma_1 R-\frac{\eta^2\kappa_1}{2}\mathbf{I}.
\end{equation}
Since $\Gamma_1 \preceq \mathbf{I}$ and $R^\top R \preceq \kappa_2 \mathbf{I}$, then
\eqref{eq:to_establish_low_bound_Q1} is verified.

\textbf{Next, we show that \eqref{eq:to_establish_q3q2q3} holds.} Let us write $Q_3$ as follows
\begin{equation}
    Q_3=\eta Q_{3,\mathrm{a}} R^\top + \eta Q_{3,\mathrm{b}},
\end{equation}    
where 
\begin{equation}\label{eq:Q3a_Q3b}
Q_{3,\mathrm{a}}  := 
H +
{\rho R^\top \Gamma_2 R - \frac{\eta\kappa_1}{2c}\mathbf{I}},\quad
Q_{3,\mathrm{b}} := {\frac{\eta}{\rho}R^\top(\mathbf{I}-\Gamma_1)}.
\end{equation}
Now, note from \eqref{eq:aux_low_bound_Q2} that  $Q_2\succeq \eta RR^\top$ implies that $Q_2^{-1}\preceq  \frac{1}{\eta}(RR^\top)^{-1}$ (recall that $RR^\top$ is positive-definite and hence invertible).  
Then the following holds:
\begin{align}\label{eq:bound_q3q2q3_ONE}
    Q_3 Q_2^{-1} Q_3^\top & \preceq Q_3 \left( \frac{1}{\eta}(RR^\top)^{-1} \right)Q_3^\top \nonumber\\
     & = (\eta Q_{3,\mathrm{a}} R^\top + \eta Q_{3,\mathrm{b}}) \left( \frac{1}{\eta}(RR^\top)^{-1} \right)(\eta R Q_{3,\mathrm{a}}^\top +\eta Q_{3,\mathrm{b}}^\top)\nonumber\\
     & = \eta Q_{3,\mathrm{a}}R^\top(RR^\top)^{-1}RQ_{3,\mathrm{a}}^\top+ 2\mathrm{symm}\{\eta Q_{3,\mathrm{a}}R^\top(RR^\top)^{-1}Q_{3,\mathrm{b}}^\top\} +\eta Q_{3,\mathrm{b}}(RR^\top)^{-1}Q_{3,\mathrm{b}}^\top.
\end{align}
Since for any symmetric matrix $M$, \( M\preceq \|M\|\mathbf{I}\), and \(R^\top (R R^\top)^{-1} R \preceq \mathbf{I}\), then the right-hand side of \eqref{eq:bound_q3q2q3_ONE} satisfies the following:
\begin{align}\label{eq:bound_q3q2q3_TWO}
&\eta Q_{3,\mathrm{a}}R^\top(RR^\top)^{-1}RQ_{3,\mathrm{a}}^\top+ 2\mathrm{symm}\{\eta Q_{3,\mathrm{a}}R^\top(RR^\top)^{-1}Q_{3,\mathrm{b}}^\top\} +\eta Q_{3,\mathrm{b}}(RR^\top)^{-1}Q_{3,\mathrm{b}}^\top  \nonumber\\
     & \preceq  
     \eta Q_{3,\mathrm{a}} Q_{3,\mathrm{a}}^\top 
+ 2\eta\bigl\| Q_{3,\mathrm{a}} R^\top (R R^\top)^{-1} Q_{3,\mathrm{b}}^\top \bigr\|\mathbf{I} + \eta\bigl\| Q_{3,\mathrm{b}} (R R^\top)^{-1} Q_{3,\mathrm{b}}^\top \bigr\|\mathbf{I}.
\end{align}
Let us develop upper bounds for each of the terms in the right-hand side of \eqref{eq:bound_q3q2q3_TWO}. We begin with $\eta Q_{3,\mathrm{a}} Q_{3,\mathrm{a}}^\top$:
\begin{align}\label{eq:bound_q3q2q3_THREE}
   \eta Q_{3,\mathrm{a}} Q_{3,\mathrm{a}}^\top & = \eta \left(H +
{\rho R^\top \Gamma_2 R - \frac{\eta\kappa_1}{2c}\mathbf{I}} \right)\left(H+\rho R^\top\Gamma_2R-\frac{\eta\kappa_1}{2c}\mathbf{I} \right) \nonumber\\
& = \eta \left(H^2+2\mathrm{symm\{}H(\rho R^\top \Gamma_2 R-\frac{\eta\kappa_1}{2c}\mathbf{I})\}+({\rho R^\top \Gamma_2 R - \frac{\eta\kappa_1}{2c}\mathbf{I}})^2 \right) \nonumber\\
&\preceq 
\eta H^2
+ 2\eta \|H\| \|\rho R^\top \Gamma_2 R - \frac{\eta\kappa_1}{2c}\mathbf{I}\| \mathbf{I}
+ \eta \|\rho R^\top \Gamma_2 R - \frac{\eta\kappa_1}{2c}\mathbf{I}\|^2 \mathbf{I} \nonumber
\\
&\preceq
\eta \ell H
+ \underbrace{\Bigl(
2\eta \ell (\rho\kappa_2 + \frac{\eta\kappa_1}{2c})
+
\eta(\rho\kappa_2 + \frac{\eta\kappa_1}{2c})^2
\Bigr)}_{\mathrm{h}_2(c)} \mathbf{I} \nonumber\\
& = \eta \ell H +h_2(c)\mathbf{I}.
\end{align}
We proceed analogously with the second term and third terms  in the right-hand side of \eqref{eq:bound_q3q2q3_TWO}. For the second term, using sub-multiplicativity, we have that

\begin{align}\label{eq:bound_q3q2q3_FOUR}
2\eta\bigl\| Q_{3,\mathrm{a}} R^\top (R R^\top)^{-1} Q_{3,\mathrm{b}}^\top \bigr\|\mathbf{I}
& \preceq 
2\eta\|Q_{3,\mathrm{a}}\|\,\|R^\top\|\|(R R^\top)^{-1}\|\|Q_{3,\mathrm{b}}^\top\|\mathbf{I} \notag \\
& = 
2\eta\Bigl\|H + \rho R^\top \Gamma_2 R - \frac{\eta\kappa_1}{2c}\mathbf{I}\Bigr\|\|R^\top\|\|(R R^\top)^{-1}\|
\Bigl\|\frac{\eta}{\rho}(\mathbf{I}-\Gamma_1)R\Bigr\|\mathbf{I} \notag \\
& \preceq 
2\eta\,\Bigl(\|H\| + \rho\,\|R^\top \Gamma_2 R\| + \frac{\eta\kappa_1}{2c}\|\mathbf{I}\|\Bigr)
\|R^\top\|\|(R R^\top)^{-1}\|
\Bigl\|\frac{\eta}{\rho}(\mathbf{I}-\Gamma_1)R\Bigr\|\,\mathbf{I} 
\end{align}
Since \eqref{eq:bounds_H_xi} implies $\|H\|\le \ell$, \eqref{eq:BoundOnR} implies $\|R^\top R\|\le \kappa_2$, $\|(R^\top R)^{-1}\|\le \kappa_1$, and $\|R\|=\|R^\top\|\le \sqrt{\kappa_2}$, and \eqref{eq:Gamma_1and_Gamma_2} implies $\|\Gamma_1\|\le 1$ and $\|\Gamma_2\|\le 1$, then, 
\begin{align}\label{eq:bound_q3q2q3_FOUR_CONTINUED}
2\eta\bigl\| Q_{3,\mathrm{a}} R^\top (R R^\top)^{-1} Q_{3,\mathrm{b}}^\top \bigr\|\mathbf{I} & \preceq  
2\eta\Bigl(\ell + \rho\kappa_2 + \frac{\eta\kappa_1}{2c}\Bigr)
\Bigl( \frac{\sqrt{\kappa_2}}{\kappa_1} \Bigr)
\Bigl(\frac{\eta \sqrt{\kappa_2}}{\rho} \Bigr)\mathbf{I}
\notag \\
& \preceq \left(\frac{2\eta^2}{\rho}
\Bigl(
\ell + \rho\kappa_2 + \frac{\eta\kappa_1}{2c}
\Bigr)
\frac{\kappa_2}{\kappa_1}\right)\mathbf{I} \notag\\
& =\mathrm{h}_3(c)\mathbf{I}.
\end{align}

We proceed analogously with the third term, to obtain:
\begin{align}\label{eq:bound_q3q2q3_FIVE}
\eta\bigl\| Q_{3,\mathrm{b}} (R R^\top)^{-1} Q_{3,\mathrm{b}}^\top \bigr\|\mathbf{I}
&\preceq \eta\,\|Q_{3,\mathrm b}\|\,\|(RR^\top)^{-1}\|\,\|Q^\top_{3,\mathrm b}\|\,\mathbf I \notag\\
&= \eta\,\|Q_{3,\mathrm b}\|^2\,\|(RR^\top)^{-1}\|\,\mathbf I \notag\\
&= \eta\,\Bigl\|\frac{\eta}{\rho}(\mathbf I-\Gamma_1)R\Bigr\|^2\,\|(RR^\top)^{-1}\|\,\mathbf I \notag \\
& \preceq \eta\,\Bigl(\frac{\eta}{\rho}\Bigr)^2\,\|\mathbf I-\Gamma_1\|^2\,\|R\|^2\,\|(RR^\top)^{-1}\|\,\mathbf I.
\end{align}
Since \eqref{eq:Gamma_1and_Gamma_2} implies $\| \mathbf{I}-\Gamma_1\| \leq 1$, \eqref{eq:BoundOnR} implies $\| R \| \leq \sqrt{\kappa_2}$ and $\| (R^\top R)^{-1}  \| \leq \kappa_1$, then, 
\begin{align}\label{eq:bound_q3q2q3_FIVE_CONTINUED}
\eta\bigl\| Q_{3,\mathrm{b}} (R R^\top)^{-1} Q_{3,\mathrm{b}}^\top \bigr\|\mathbf{I}&
\preceq \eta\,\Bigl(\frac{\eta}{\rho}\Bigr)^2\,(\kappa_2)\,\Bigl(\frac{1}{\kappa_1}\Bigr)\mathbf I \notag\\
&= \left(\eta^3\rho^{-2}\frac{\kappa_2}{\kappa_1}\right)\mathbf I \notag \\
& = \mathrm h_4(c)\,\mathbf I.
\end{align}

\textbf{By considering \eqref{eq:bound_q3q2q3_ONE}--\eqref{eq:bound_q3q2q3_FIVE_CONTINUED}, it follows that
\eqref{eq:to_establish_q3q2q3} holds.}  Then we can move on and combine 
\eqref{eq:to_establish_low_bound_Q1} and \eqref{eq:to_establish_q3q2q3} to obtain the following:
\begin{align}
Q_1 - Q_3 Q_2^{-1} Q_3^\top
&\succeq
(2\eta c - \eta\ell) H
-
\bigl(
\mathrm{h}_1
+ \mathrm{h}_2(c)
+ \mathrm{h}_3(c)
+ \mathrm{h}_4
\bigr)\mathbf{I}
\nonumber\\[1mm]
&\succeq
\Bigl[
2\eta\mu c
-
\bigl(
\eta\mu\ell
+
\mathrm{h}_1+\mathrm{h}_2(c)+\mathrm{h}_3(c)+\mathrm{h}_4
\bigr)
\Bigr]\mathbf{I},
\label{eq:schur_lower_bound}
\end{align}
where we again have used $\mu \preceq H$ from \eqref{eq:bounds_H_xi}.  Let
\begin{align}
\Delta(c)
\coloneqq\
2\eta\mu c
-
\bigl(
\eta\mu\ell
+
\mathrm{h}_1+\mathrm{h}_2(c)+\mathrm{h}_3(c)+\mathrm{h}_4
\bigr),
\label{eq:Delta_def}
\end{align}
then \eqref{eq:schur_lower_bound} is implied from
\(
Q_1 - Q_3 Q_2^{-1} Q_3^\top \succeq \Delta(c)\mathbf{I}.
\)
We now show that $\Delta(c_o)>0$ for the choice $c=c_o$, where $c_o$ denotes the smallest admissible value satisfying \eqref{eq:ChoiceOfC3}, i.e., $c_0 \ge
20\,\ell\,
\Big[\max\!\left(\tfrac{\rho \kappa_{2}}{\mu},\, \tfrac{\ell}{\mu}\right)\Big]^{2}
\Big[\max\!\left(\tfrac{\eta}{\ell\rho},\, \tfrac{\ell}{\mu}\right)\Big]^{2}
\tfrac{\kappa_{2}}{\kappa_{1}}$. To this end, define
\begin{subequations}\label{eq:q_defs}
\begin{align}
q_{\rho}
&\coloneqq
\max\left(\frac{\rho\kappa_{2}}{\mu},\frac{\ell}{\mu}\right),
\\
q_{\eta}
&\coloneqq
\max\left(\frac{\eta}{\ell\rho},\,\frac{\ell}{\mu}\right),
\end{align}
\end{subequations}
and set $c=c_o$, where we take equality in \eqref{eq:ChoiceOfC3}, i.e.,
\begin{align}
c_o
\coloneqq
20\ell q_{\rho}^{2} q_{\eta}^{2}\frac{\kappa_{2}}{\kappa_{1}}.
\label{eq:co_def}
\end{align}
By construction, the following inequalities hold:
\begin{align}
\rho\kappa_2 \le q_\rho \mu,
\qquad
\ell \le q_\rho \mu,
\qquad
\eta \le q_\eta \ell\rho,
\qquad
q_\rho \ge 1,
\qquad
q_\eta \ge 1:
\label{eq:q_properties}
\end{align}
We note in particular that $q_\rho\geq 1$ and $q_\eta\geq 1$ due to the fact that $\mu\leq \ell\Leftrightarrow 1\leq \ell/\mu$ according to Proposition~\ref{prop:shifted_primal_dual} and Assumption~\ref{assump:uConvex}.
Substituting \eqref{eq:co_def} into $2\eta\mu c$ yields
\begin{align}
2\eta\mu c_o
=
40\eta\mu\ell q_\rho^2 q_\eta^2
\frac{\kappa_2}{\kappa_1}.
\label{eq:2etamuco}
\end{align}
\textbf{Next, we will produce a lower bound for  $\Delta(c_o)$ by finding bounds for each of its composing terms.} Note that since $\kappa_2/\kappa_1\geq 1$, due to Assumption~\ref{assump:BoundOnR}, then
$q_\rho^2 q_\eta^2 \frac{\kappa_2}{\kappa_1} \ge 1$. As a consequence,  
\eqref{eq:2etamuco} implies that
\begin{align}
\eta\mu\ell
\le
\frac{1}{40}\,(2\eta\mu c_o).
\label{eq:bound_etamul}
\end{align}

Next, recall that
\begin{align}
\mathrm{h}_1
=
2\eta^2\kappa_2
+
\frac{\eta^2\kappa_1}{2}.
\label{eq:h1_def}
\end{align}
Since $\kappa_1 \le \kappa_2$ (by Assumption~\ref{assump:BoundOnR}), it follows that
\begin{align}
\mathrm{h}_1
\le
\frac{5}{2}\eta^2\kappa_2.
\label{eq:h1_step1}
\end{align}
Using $\eta \le q_\eta\ell\rho$ and $\rho\kappa_2 \le q_\rho\mu$ from \eqref{eq:q_properties}, we further obtain that
\begin{align}
\mathrm{h}_1
\le
\frac{5}{2}\eta(q_\eta\ell\rho)\kappa_2
\le
\frac{5}{2}(\eta \mu   \ell)  q_\eta q_\rho.
\label{eq:h1_step2}
\end{align}
From \eqref{eq:2etamuco}, $\eta \mu\ell=\tfrac{1}{20}\tfrac{\eta \mu c_0}{q_\rho^2 q_\eta ^2\tfrac{\kappa_2}{\kappa_1}}$. As noted earlier,  $q_\rho,q_\eta\ge 1$. Also, $\kappa_2/\kappa_1\geq 1$ due to Assumption~\ref{assump:BoundOnR}. Then, the following is implied from \eqref{eq:h1_step2}:
\begin{align}
\mathrm{h}_1
\le
\frac{1}{16}\,(2\eta\mu c_o).
\label{eq:bound_h1}
\end{align}

For $\mathrm{h}_2(c)$, recall that 
\begin{align}
\mathrm{h}_2(c_o) & =2\eta \ell (\rho\kappa_2 + \frac{\eta\kappa_1}{2c_0})
+
\eta(\rho\kappa_2 + \frac{\eta\kappa_1}{2c_0})^2.
%  & \le FROM Here \\
% &  \le
% (2\eta\mu c_o)\,
% \frac{\kappa_1}{\kappa_2}
% \left(
% \frac{41}{800\,q_\rho q_\eta}
% +
% \frac{1681\,\mu}{64000\,\ell}
% \right).
\label{eq:h2_intermediate}
\end{align}
Using $c_o=20\ell q_{\rho}^{2} q_{\eta}^{2}\frac{\kappa_{2}}{\kappa_{1}}$ from \eqref{eq:co_def} and $\rho \kappa_2 \le q_\rho \mu$ from \eqref{eq:q_properties}, we get from \eqref{eq:h2_intermediate} that
\begin{align*}
   \mathrm{h}_2(c_o) \le 2\eta \ell ( q_\rho \mu + \frac{\eta\kappa^2_1}{40 \ell q^2_\rho q^2_\eta \kappa_2})
+
\eta( q_\rho \mu + \frac{\eta\kappa^2_1}{40 \ell q^2_\rho q^2_\eta \kappa_2})^2,
\end{align*}
Since $q_\rho , q_\eta \ge 1$ (see \eqref{eq:q_properties}) and $\kappa_2 \ge \kappa_1$ (see Assumption \ref{assump:BoundOnR}), it follows that
\begin{align*}
    \mathrm{h}_2(c_o) \le 2\eta \ell ( q_\rho \mu + \frac{\eta\kappa_1}{40 \ell })
+
\eta( q_\rho \mu + \frac{\eta\kappa_1}{40 \ell })^2.
\end{align*}
Given that $\eta \le q_{\eta}\ell\rho$ and $q_\eta \ge 1$ (see \eqref{eq:q_properties}), the above expression implies that
\begin{align*}
   \mathrm{h}_2(c_o) \le 2\eta \ell ( q_\eta q_\rho \mu + \frac{q_{\eta}\,\ell\rho \kappa_1}{40 \ell })
+
\eta(q_\eta q_\rho \mu + \frac{q_{\eta}\,\ell\rho \kappa_1}{40 \ell })^2.
\end{align*}
Using again the facts that $\kappa_2 \ge \kappa_1$ and $\rho\kappa_{2} \le q_{\rho}\,\mu$, we further get that
\begin{align*}
    \mathrm{h}_2(c_o) \le 2\eta \ell (  \frac{41q_{\eta}\, q_{\rho} \mu }{40 })
+
\eta(\frac{41q_{\eta}\, q_{\rho} \mu }{40 })^2.
\end{align*}
From  \eqref{eq:2etamuco}, $q_\rho q_\eta\mu =\frac{2\eta \mu c_0}{40\eta \mu \ell q_\rho q_\eta \tfrac{\kappa_1}{\kappa_2}}$, then
\begin{align*}
    \mathrm{h}_2(c_o) & \le 2\eta \ell (  \frac{41q_{\eta}\, q_{\rho} \mu }{40 }) (\frac{2 \eta \mu c_o}{40 \eta \mu \ell  q^2_{\rho} q^2_{\eta} \frac{\kappa_2}{\kappa_1}})
+
\eta(\frac{41q_{\eta}\, q_{\rho} \mu }{40 })^2 (\frac{2 \eta \mu c_o}{40 \eta \mu \ell  q^2_{\rho} q^2_{\eta} \frac{\kappa_2}{\kappa_1}})\\
 & = (2\eta\mu c_o)\,\frac{\kappa_1}{\kappa_2}
\left(
\frac{41}{800\,q_\rho q_\eta}
+
\frac{1681\,\mu}{64000\,\ell}
\right).    
\end{align*}
Considering that $q_\rho,q_\eta\ge 1$, due to \eqref{eq:q_properties}, $\kappa_1/\kappa_2\le 1$, due to Assumption~\ref{assump:BoundOnR}, and $\mu/\ell\le 1$, due to Proposition~\ref{prop:shifted_primal_dual} and Assumption~\ref{assump:uConvex},
we  obtain the following inequality for $h_2(c)$:
\begin{align}
\mathrm{h}_2(c_o)
\le
\frac{4961}{64000}\,(2\eta\mu c_o).
\label{eq:bound_h2}
\end{align}

Now for $\mathrm{h}_3(c)$,  recall that
\begin{align}
\mathrm{h}_3(c)
=
\frac{2\eta^2}{\rho}
\Bigl(
\ell + \rho\kappa_2 + \frac{\eta\kappa_1}{2c}
\Bigr)
\frac{\kappa_2}{\kappa_1}.
\label{eq:h3_def}
\end{align}
Using $\ell \leq q_\rho \mu$ and $\rho \kappa_2\leq q_\rho \mu$ from \eqref{eq:q_properties} and  $c=c_o=20\ell q_{\rho}^{2} q_{\eta}^{2}\frac{\kappa_{2}}{\kappa_{1}}$ from \eqref{eq:co_def}, we obtain the following:
\begin{align*}
    \mathrm{h}_3(c_o) \le  \frac{2\eta^2}{\rho}
\Bigl(
2q_\rho \mu + \frac{\eta\kappa^2_1}{40 \ell q^2_\rho q^2_\eta \kappa_2}
\Bigr)
\frac{\kappa_2}{\kappa_1}.
\end{align*}
Since $q_\rho,q_\eta \ge 1$ (see \eqref{eq:q_properties}), $\kappa_2 \ge \kappa_1$ (see Assumption \ref{assump:BoundOnR}), it further follows that
\begin{align*}
   \mathrm{h}_3(c_o) \le  \frac{2\eta^2}{\rho}
\Bigl(
2q_\rho \mu +\frac{\eta\kappa_1}{40 \ell }
\Bigr)
\frac{\kappa_2}{\kappa_1}.
\end{align*}
From \eqref{eq:q_properties},  $\eta \le q_{\eta}\,\ell\rho$ and $q_\eta \ge 1$. Then,
\begin{align*}
  \mathrm{h}_3(c_o) \le  \frac{2\eta^2}{\rho}
\Bigl(
2q_\eta q_\rho \mu + \frac{q_{\eta}\,\ell\rho \kappa_1}{40 \ell }
\Bigr)
\frac{\kappa_2}{\kappa_1}.
\end{align*}
Using again the inequalities $\kappa_2 \ge \kappa_1$ (see Assumption \ref{assump:BoundOnR}) and $\rho \kappa_2 \le q_\rho \mu$ (see \eqref{eq:q_properties}), leads to
\begin{align*}
    \mathrm{h}_3(c_o) \le  \frac{2\eta^2}{\rho}
\Bigl(
 \frac{81 q_{\eta} q_\rho \mu}{40 }
\Bigr)
\frac{\kappa_2}{\kappa_1}.
\end{align*}
Given that $\eta \le q_{\eta}\,\ell\rho$ (see \eqref{eq:q_properties}), and using $q_\rho q_\eta\mu =\frac{2\eta \mu c_0}{40\eta \mu \ell q_\rho q_\eta \tfrac{\kappa_1}{\kappa_2}}$ from \eqref{eq:2etamuco},  then the following expression can be obtained:
\begin{align*}
    \mathrm{h}_3(c_o) \le  \frac{2\eta}{\rho} (q_\eta \ell \rho)
\Bigl(
 \frac{81 q_{\eta} q_\rho \mu}{40 }
\Bigr)
\frac{\kappa_2}{\kappa_1}  \frac{2 \eta \mu c_o}{40 \eta \mu \ell  q^2_{\rho} q^2_{\eta} \frac{\kappa_2}{\kappa_1}}= \frac{81}{800 q_\rho} (2 \eta \mu c_o).
\end{align*}
Since $q_\rho \ge 1$ (see \eqref{eq:q_properties}), we can  obtain a suitable upper bound for $h_3(c)$ as follows:
\begin{align}
    \mathrm{h}_3(c_o) \le \frac{81}{800} (2 \eta \mu c_o).
\label{eq:bound_h3}
\end{align}

Finally, recall that
\begin{align}
\mathrm{h}_4
=
\eta^3\rho^{-2}\frac{\kappa_2}{\kappa_1}.
\label{eq:h4_def}
\end{align}
From \eqref{eq:q_properties}, it holds that  $\frac{\eta}{\rho}  \le q_{\eta}\,\ell$. Then,
\begin{align*}
\mathrm{h}_4 & \le  \eta q^2_{\eta}\,\ell^2
\frac{\kappa_2}{\kappa_1}.
\end{align*}
Through \eqref{eq:2etamuco}, $\frac{2 \eta \mu c_o}{40 \eta \mu \ell  q^2_{\rho} q^2_{\eta} \frac{\kappa_2}{\kappa_1}}=1$, then
\begin{align*}
\mathrm{h}_4 & \le  \eta q^2_{\eta}\,\ell^2
\frac{\kappa_2}{\kappa_1} ( \frac{2 \eta \mu c_o}{40 \eta \mu \ell  q^2_{\rho} q^2_{\eta} \frac{\kappa_2}{\kappa_1}}),
\end{align*}
which after simplification leads to
\begin{align*}
\mathrm{h}_4 & \le \frac{\ell}{40 \mu q^2_\rho} (2 \eta \mu c_o).
\end{align*}
Since from  \eqref{eq:q_properties} it holds that $\frac{\ell}{\mu} \le q_\rho$ and $q_\rho \ge 1$, then
\begin{align}
\mathrm{h}_4 & \le \frac{1}{40} (2 \eta \mu c_o).
\label{eq:bound_h4}
\end{align}

Collecting the bounds \eqref{eq:bound_etamul},
\eqref{eq:bound_h1},
\eqref{eq:bound_h2},
\eqref{eq:bound_h3},
and \eqref{eq:bound_h4}, we obtain that
\begin{align}
\eta\mu\ell
+
\mathrm{h}_1
+
\mathrm{h}_2(c_o)
+
\mathrm{h}_3(c_o)
+
\mathrm{h}_4
& \le
\Bigl(
\frac{1}{40}
+
\frac{1}{16}
+
\frac{4961}{64000}
+
\frac{81}{800}
+
\frac{1}{40}
\Bigr)
(2\eta\mu c_o) \nonumber\\
& = 0.291 \left(2\eta \mu c_0 \right).
\label{eq:sum_bounds}
\end{align}
\textbf{With \eqref{eq:sum_bounds} we can produce a lower bound for $\Delta(c_0)$ as follows:}
\begin{align}
\Delta(c_o)
\ge
\bigl(1-0.291\bigr)(2\eta\mu c_o)
>
0.
\label{eq:Delta_positive}
\end{align}
Consequently, from \eqref{eq:schur_lower_bound},
\begin{align}
Q_1 - Q_3 Q_2^{-1} Q_3^\top \succ \mathbf{0}.
\label{eq:schur_pd}
\end{align}
Thus, the shifted \gls{a-pdgd} is strictly passive  with quadratic storage function \(S_{\mathrm{c}}(\tilde{\theta})=\tilde{\theta}^{\top}P_2\tilde{\theta}\) and with respect to the input--output pair
\((\tilde{v}_{\mathrm{pd}},\, 2\,B_{\mathrm{pd}}^{\top} P_2 \tilde{\theta})\).
\end{proof}

\end{document}

%% file: figs/fig_buses_tikz.tex
% fig_buses_tikz.tex

\begin{circuitikz}[american currents, scale=0.8, transform shape]

% -------- Voltage-following bus j ∈ N --------
    \draw (4,0) node[circ,label=above:$v_{\mathrm{f},j}$] (busF) {};

    \draw (2.5,-2) node[ground] (gF1) {}
          to[isource,l=$i_{\mathrm{f},j}$] (2.5,0) coordinate (iF_top);
    \draw (2.5,0) to[short] (busF);

    \draw (4,0) to[C,l=$c_{\mathrm{f},j}$] (4,-2) node[ground] (gF2) {};
    \draw (4,0) to[short] (busF);

    \draw (5.5,0) to[generic,l=$\psi_{\mathrm{load},j}$] (5.5,-2) node[ground] (gF3) {};
    \draw (5.5,0) to[short] (busF);

    \draw (7.75,0) to[isource,l_=$\delta_{j}$] (7.75,-2) node[ground] (gF4) {};
    \draw (7.75,0) to[short] (busF);

    \node[
        draw, rounded corners,
        inner xsep=22pt, inner ysep=23pt,
        fit=(iF_top)(gF1)(gF2)(gF3)(gF4)(busF),
        label={[anchor=north,yshift=-2pt]north:$\boldsymbol{j\in\mathcal{N}_{\mathrm{f},i}}$}
    ] {};

% -------- Voltage-setting bus k ∈ S_i --------
    \begin{scope}[xshift=6.1cm]
        \draw (4,0) node[circ,label=above:$v_{\mathrm{s},j}$] (busS) {};
        \draw (4,0) to[cvsource,l=$v_{\mathrm{s},j}$] (4,-2) node[ground] (gS) {};
        \path (4,0) coordinate (s_top);

        \node[
            draw, rounded corners,
            inner xsep=22pt, inner ysep=23pt,
            fit=(s_top)(busS)(gS),
            label={[anchor=north,yshift=-2pt]north:$\boldsymbol{j\in\mathcal{N}_{\mathrm{s},i}}$}
        ] {};
    \end{scope}

\end{circuitikz}

%% file: references.bib
@article{qu_li_augPD2,
  author       = {Qu, Guannan and Li, Na},
  title        = {On the Exponential Stability of Primal-Dual Gradient Dynamics},
  journal      = {arXiv preprint},
  year         = {2018},
  eprint       = {1803.01825},
  archivePrefix= {arXiv},
  primaryClass = {math.OC},
  url          = {https://arxiv.org/abs/1803.01825}
}

@article{ortega2004interconnection,
  title={Interconnection and damping assignment passivity-based control: A survey},
  author={Ortega, Romeo and Garcia-Canseco, Eloisa},
  journal={European Journal of control},
  volume={10},
  number={5},
  pages={432--450},
  year={2004},
  publisher={Elsevier}
}

@book{van2000l2,
  title={L2-gain and passivity techniques in nonlinear control},
  author={Van der Schaft, Arjan},
  year={2000},
  publisher={Springer}
}

@book{horn,
  title={Matrix analysis},
  author={Horn, Roger A and Johnson, Charles R},
  year={2012},
  publisher={Cambridge university press}
}

@book{liberzon2003,
  title={Switching in systems and control},
  author={Liberzon, Daniel},
  volume={190},
  year={2003},
  publisher={Springer}
}

@article{Cucuzzella,
	author = {{Cucuzzella, Michele} and {Scherpen, Jacquelien M. A.} and {Machado, Juan E.}},
	title = {Microgrids control: AC or DC, that is not the question},
	DOI= "10.1051/epjconf/202431000015",
	journal = {EPJ Web Conf.},
	year = 2024,
	volume = 310,
	pages = "00015",
}

@ARTICLE{Dall’Anese,
  author={Colombino, Marcello and Dall’Anese, Emiliano and Bernstein, Andrey},
  journal={IEEE Transactions on Control of Network Systems}, 
  title={Online optimization as a feedback controller: Stability and tracking}, 
  year={2020},
  volume={7},
  number={1},
  pages={422-432},
  doi={10.1109/TCNS.2019.2906916}}

@ARTICLE{qu_li_augPD,
  author  = {Qu, Guannan and Li, Na},
  journal = {IEEE Control Systems Letters},
  title   = {On the exponential stability of primal-dual gradient dynamics},
  year    = {2019},
  volume  = {3},
  number  = {1},
  pages   = {43--48},
  doi     = {10.1109/LCSYS.2018.2851375}
}

@ARTICLE{Huang,
  author  = {Huang, Mingyu and Ding, Li and Li, Wenqu and Chen, Chao-Yang and Liu, Zhiwei},
  journal = {IEEE Transactions on Circuits and Systems I: Regular Papers},
  title   = {Distributed observer-based $H_\infty$ fault-tolerant control for {DC} microgrids with sensor fault},
  year    = {2021},
  volume  = {68},
  number  = {4},
  pages   = {1659--1670},
  doi     = {10.1109/TCSI.2020.3048971}
}

@misc{zaidi2025dcmg,
  author       = {Syed, Wasif},
  title        = {DC Microgrid Model, Constraint and Control Parameters (v1.0.1)},
  year         = {2026},
  doi          = {10.5281/zenodo.18529516},
  url          = {https://doi.org/10.5281/zenodo.18529516},
  note         = {Dataset},
}

@book{khalil2002nonlinear,
  title={Nonlinear Systems},
  author={Khalil, H.K.},
  volume    = {3},
  year      = {2002},
  publisher={Prentice Hall}
}

@ARTICLE{opt-alg-robust-fb,
  title   = {Optimization algorithms as robust feedback controllers},
  journal = {Annual Reviews in Control},
  volume  = {57},
  pages   = {100941},
  year    = {2024},
  author  = {Hauswirth, Adrian and He, Zhiyu and Bolognani, Saverio and Hug, Gabriela and D{\"o}rfler, Florian},
  doi     = {10.1016/j.arcontrol.2024.100941}
}

@BOOK{boyd_book,
  title     = {Convex optimization},
  author    = {Boyd, Stephen and Vandenberghe, Lieven},
  year      = {2004},
  publisher = {Cambridge University Press}
}

@ARTICLE{ZHANG,
  author  = {Zhang, Youmin and Jiang, Jin},
  title   = {Bibliographical review on reconfigurable fault-tolerant control systems},
  journal = {Annual Reviews in Control},
  volume  = {32},
  number  = {2},
  pages   = {229--252},
  year    = {2008},
  doi     = {10.1016/j.arcontrol.2008.03.008}
}

@ARTICLE{ChunXie,
  author  = {Xie, Chun-Hua and Yang, Guang-Hong},
  title   = {Decentralized adaptive fault-tolerant control for large-scale systems with external disturbances and actuator faults},
  journal = {Automatica},
  volume  = {85},
  pages   = {83--90},
  year    = {2017},
  doi     = {10.1016/j.automatica.2017.07.037}
}

@INPROCEEDINGS{SchenkFrance,
  author    = {Schenk, Kai and Lunze, Jan},
  booktitle = {2020 28th Mediterranean Conference on Control and Automation (MED)},
  title     = {Fault-tolerant task allocation in networked control systems},
  year      = {2020},
  pages     = {313--318},
  doi       = {10.1109/MED48518.2020.9182977}
}

@INPROCEEDINGS{SchenkMorocco,
  author    = {Schenk, Kai and Lunze, Jan},
  booktitle = {2019 4th Conference on Control and Fault Tolerant Systems (SysTol)},
  title     = {Fault tolerance in networked systems through flexible task assignment},
  year      = {2019},
  pages     = {257--263},
  doi       = {10.1109/SYSTOL.2019.8864764}
}

@ARTICLE{SchenkCooperative,
  author  = {Schenk, Kai and G{\"u}lbitti, Baris and Lunze, Jan},
  title   = {Cooperative fault-tolerant control of networked control systems},
  journal = {IFAC-PapersOnLine},
  volume  = {51},
  number  = {24},
  pages   = {570--577},
  year    = {2018},
  note    = {10th IFAC Symposium on Fault Detection, Supervision and Safety for Technical Processes (SAFEPROCESS 2018)},
  doi     = {10.1016/j.ifacol.2018.09.633}
}

@ARTICLE{Francesca,
  author  = {Boem, Francesca and Gallo, Alexander J. and Raimondo, Davide M. and Parisini, Thomas},
  journal = {IEEE Transactions on Control of Network Systems},
  title   = {Distributed fault-tolerant control of large-scale systems: An active fault diagnosis approach},
  year    = {2020},
  volume  = {7},
  number  = {1},
  pages   = {288--301},
  doi     = {10.1109/TCNS.2019.2913557}
}

@BOOK{ferrari,
  title     = {Safety, security and privacy for cyber-physical systems},
  author    = {Ferrari, Riccardo M. G. and Teixeira, Andr{\'e} M. H.},
  year      = {2021},
  publisher = {Springer}
}

@ARTICLE{Xian-Ming,
  author  = {Zhang, Xian-Ming and Han, Qing-Long and Ge, Xiaohua and Ding, Derui and Ding, Lei and Yue, Dong and Peng, Chen},
  journal = {IEEE/CAA Journal of Automatica Sinica},
  title   = {Networked control systems: A survey of trends and techniques},
  year    = {2020},
  volume  = {7},
  number  = {1},
  pages   = {1--17},
  doi     = {10.1109/JAS.2019.1911651}
}

@ARTICLE{CBI_OrtegaArjan,
  author  = {Ortega, Romeo and van der Schaft, Arjan and Castanos, Fernando and Astolfi, Alessandro},
  journal = {IEEE Transactions on Automatic Control},
  title   = {Control by interconnection and standard passivity-based control of port-{Hamiltonian} systems},
  year    = {2008},
  volume  = {53},
  number  = {11},
  pages   = {2527--2542},
  doi     = {10.1109/TAC.2008.2006930}
}

@INPROCEEDINGS{michele,
  author    = {Kosaraju, Krishna Chaitanya and Cucuzzella, Michele and Scherpen, Jacquelien M. A.},
  booktitle = {2019 IEEE 58th Conference on Decision and Control (CDC)},
  title     = {Distributed control of {DC} microgrids using primal-dual dynamics},
  year      = {2019},
  pages     = {6215--6220},
  doi       = {10.1109/CDC40024.2019.9029175}
}

@ARTICLE{claudio_arjan,
  author  = {Stegink, Tjerk W. and De Persis, Claudio and van der Schaft, Arjan J.},
  title   = {Stabilization of structure-preserving power networks with market dynamics},
  journal = {IFAC-PapersOnLine},
  volume  = {50},
  number  = {1},
  pages   = {6737--6742},
  year    = {2017},
  note    = {20th IFAC World Congress},
  doi     = {10.1016/j.ifacol.2017.08.1172}
}
